\title{Determinization of One-Counter Nets} 
\author{ }{ }{}{}{}
\author{Shaull Almagor}{Department of Computer Science, Israel Institute of Technology, 3200002, Israel}{shaull@cs.technion.ac.il}{0000-0001-9021-1175}{}
\author{Asaf Yeshurun}{Department of Computer Science, Israel Institute of Technology, 3200002, Israel}{asafyeshurun@cs.technion.ac.il}{0000-0001-9149-1172}{}
\authorrunning{ }
\authorrunning{Shaull Almagor and Asaf Yeshurun} 
\keywords{Determinization, One-Counter Net, Vector Addition System, Automata, Semilinear} 
\newtheorem*{example*}{Example}
\begin{document}

\newcommand{\tup}[1]{\langle #1 \rangle}

\newcommand{\cA}{\mathcal{A}}
\newcommand{\cB}{\mathcal{B}}
\newcommand{\cC}{\mathcal{C}}
\newcommand{\cG}{\mathcal{G}}
\newcommand{\cV}{\mathcal{V}}

\newcommand{\cD}{\mathcal{D}}
\newcommand{\cN}{\mathcal{N}}
\newcommand{\cM}{\mathcal{M}}

\newcommand{\bbN}{\mathbb{N}}
\newcommand{\bbZ}{\mathbb{Z}}

\newcommand{\lang}{\mathcal{L}}
\newcommand{\effect}[1]{\textsf{eff}(#1)}
\newcommand{\nadir}[1]{\textsf{nadir}(#1)}
\newcommand{\step}[2][]{\xrightarrow[#1]{#2}}

\newcommand{\loc}{\mathsf{Loc}}
\newcommand{\counters}{\mathsf{Z}}
\newcommand{\ops}{\mathsf{OP}}
\newcommand{\op}{\mathsf{op}}
\newcommand{\inc}{\mathsf{++}}
\newcommand{\dec}{\mathsf{--}}
\newcommand{\jz}{\mathsf{=0?}}
\renewcommand{\vec}[1]{\boldsymbol{#1}}

\newcommand{\sem}[1]{[\![#1]\!]}
\renewcommand{\phi}{\varphi}
\newcommand{\reach}{\psi_{\mathrm{Reach}}}
\newcommand{\lin}{\mathrm{Lin}}

\newcommand{\finreach}{\textsc{Finite-Reach}\xspace}
\newcommand{\finreachzero}{\textsc{0-Finite-Reach}\xspace}

\newcommand{\zerodet}{\texttt{0-Det}\xspace}
\newcommand{\existsdet}{\texttt{$\exists$-Det}\xspace}
\newcommand{\foralldet}{\texttt{$\forall$-Det}\xspace}
\newcommand{\unidet}{\texttt{Uniform-Det}\xspace}
\newcommand{\MCR}{\textrm{MCR}}

\newcommand{\shtodo}[1]{\todo[inline,color=cyan!25,size=\small]{SH: #1}}
\newcommand{\aytodo}[1]{\todo[inline,color=orange!25,size=\small]{AY: #1}}

\maketitle

\begin{abstract}
One-Counter Nets (OCNs) are finite-state automata equipped with a counter that is not allowed to become negative, but does not have zero tests. 
Their simplicity and close connection to various other models (e.g., VASS, Counter Machines and Pushdown Automata) make them an attractive model for studying the border of decidability for the classical decision problems.

The deterministic fragment of OCNs (DOCNs) typically admits more tractable decision problems, and while these problems and the expressive power of DOCNs have been studied, the determinization problem, namely deciding whether an OCN admits an equivalent DOCN, has not received attention.

We introduce four notions of OCN determinizability, which arise naturally due to intricacies in the model, and specifically, the interpretation of the initial counter value. 
We show that in general, determinizability is undecidable under most notions, but over a singleton alphabet (i.e., 1 dimensional VASS) one definition becomes decidable, and the rest become trivial, in that there is always an equivalent DOCN.

\end{abstract}

\newpage

\section{Introduction}
\label{sec:intro}
One-Counter Nets (OCNs) are finite-state machines equipped with an integer counter that cannot decrease below zero and which cannot be explicitly tested for zero.

OCNs are closely related to several computational models: they are a test-free syntactic restriction of One-Counter Automata -- Minsky Machines with only one counter. Over a singleton alphabet, they are the same as 1-dimensional Vector Addition Systems with States, and if counter updates are restricted to $\pm 1$, they are equivalent to Pushdown Automata with a single-letter stack alphabet.

An OCN $\cA$ over alphabet $\Sigma$ accepts a word $w\in \Sigma^*$ from initial counter value $c\in \bbN$ if there is a run of $\cA$ on $w$ from an initial state to an accepting state in which the counter, starting from value $c$, does not become negative. Thus, for every counter value $c\in \bbN$ the OCN $\cA$ defines a language $\lang(\cA,c)\subseteq \Sigma^*$. We define the complement of a language $\lang(\cA,c)$ to be $=\overline{\lang(\cA,c)}=\Sigma^*\setminus\lang(\cA,c)$ 

OCNs are an attractive model for studying the border of decidability of classical decision problems. Indeed -- several problems for them lie delicately close to the decidability border. For example, OCN universality is decidable~\cite{hofman2018trace}, whereas paratemerized-universality (in which the initial counter is existentially quantified) is undecidable~\cite{almagor2020parametrized}. 

As is the case with many computational models, certain decision problems for  deterministic OCNs (DOCNs) are computationally easier than for nondeterministic OCNs (e.g., inclusion is undecidable for OCNs, but is in NL for DOCNs~\cite{hofman2018trace}.  Universality is Ackermannian for OCNs, but is in NC for DOCNs~\cite{almagor2020parametrized}).
While decision problems for DOCNs have received some attention, the \emph{determinization} problem for OCNs, namely deciding whether an OCN admits an equivalent DOCN, has (to our knowledge) not been studied. 
Apart from the theoretical interest of OCN determinization, which would yield a better understanding of the model, it is also of practical interest: OCNs can be used to model properties of concurrent systems, so when an OCN can be determinized, automatic reasoning about correctness becomes easier. 

\paragraph*{OCN Determinization}
Recall that the language $\lang(\cA,c)$ of an OCN $\cA$ depends on the initial counter $c$, so $\cA$ essentially defines a family of languages. Thus, it is not clear what we mean by ``equivalent DOCN''. 
We argue that the definition of determinization depends on the role of the initial counter $c$. To this end, we identify four notions of determinization for an OCN $\cA$,  as follows. 
\begin{itemize}
    \item In \zerodet, we ask whether there is a DOCN $\cD$ such that $\lang(\cA,0)=\lang(\cD,0)$.
    \item In \existsdet, we ask whether there exist $c\in \bbN$ and a DOCN $\cD$ such that $\lang(\cA,c)=\lang(\cD,0)$.
    \item In \foralldet, we as whether for every $c\in \bbN$, there is a DOCN $\cD$ such that $\lang(\cA,c)=\lang(\cD,0)$.
    \item In \unidet, we ask whether there is a DOCN $\cD$ such that for every $c\in \bbN$ we have $\lang(\cA,c)=\lang(\cD,c)$.
\end{itemize}
The motivation for each of the problems depends, intuitively, on the interpretation of the initial counter, and on the stage at which the equivalent DOCN is computed: consider a factory, comprising a number $c$ of machines, that should be controlled by a DOCN.
The design of the controller, however, uses a nondeterministic OCN, whose initial counter value would be the number $c$ of machines. 

Now, if $c$ is fixed, then a corresponding DOCN would only have to match this specific counter, so \zerodet is needed\footnote{Any fixed initial counter $c$ can be simulated with a $0$ counter, by initially adding $c$.}. If the factory is being planned, and any amount of machines can be purchased in advance, then \existsdet is suitable. If the OCN controller needs to be reused in many factories, each with a different number of machines, then \foralldet is required, since we are assured a correct DOCN would be available for each factory. Finally, if the number of machines varies each day (e.g., if the machines are virtual, or have downtime), then we need the DOCN to correctly handle any number of them, so \unidet is suitable.

\paragraph*{Paper Organization and Contribution}
In this paper, we study the decidability of the four determinization notions. In \cref{sec:determinization} we examine the relation between the notions, and demonstrate that no pair of them coincide. In \cref{sec:general} we show that \zerodet,\existsdet, and \foralldet are generally undecidable. For \unidet, we are not able to resolve decidability, but we do show an Ackermannian lower bound. 

In order to recover some decidability, we turn to the fragment of OCNs over a singleton alphabet (1-dimensional VASS). There, we show that \zerodet,\existsdet, and \foralldet become trivial (i.e., they always hold), whereas \unidet becomes  decidable.  We conclude with a discussion and future work in \cref{sec:discussion}.

Technically, our undecidability results use reductions from two different models -- one from the model of Lossy Counter Machines~\cite{mayr2003undecidable,schnoebelen2010lossy}, and one from a careful analysis of recent results about OCNs~\cite{almagor2020parametrized}. For the decidability results, the triviality of \zerodet,\existsdet, and \foralldet is shown using standard tools for OCNs, whereas the decidability of \unidet requires some machinery from the theory of low-dimensional VASS and Presburger Arithmetic, as well as some basic linear algebra and number theory.

\paragraph*{Related Work}
The determinization problem we consider in this work assumes that the deterministic target model is also that of OCNs. An alternative approach to simplifying a nondeterministic OCN is to find an equivalent deterministic finite automaton, if one exists. This amounts to deciding whether the language of an OCN is regular. This problem was shown to be undecidable for OCNs in~\cite{valk1981petri}.  Interestingly, the related problem of regular separability was shown to be in PSPACE in~\cite{czerwinski2017regular}.
A related result in~\cite{finkel2006omega} describes a determinization procedure for ``unambiguous blind counter automata'' over infinite words, to a Muller counter machine.

From a different viewpoint, determinization is a central problem in quantitative models, which can be thought of as counter automata where the counter value is the output, rather than a Boolean language acceptor. The decidability of determinization for Tropical Weighted Autom49ata is famously open~\cite{buchsbaum2000determinization,kirsten2009deciding} with only partial decidable fragments~\cite{kirsten2009deciding,klimann2004deciding}. A slightly less related model is that of discounted-sum automata, whose determinization has intricate connections to number theory~\cite{boker2014exact}.



\section{Preliminaries}
\label{sec:prelims}
A \emph{one-counter net} (OCN) is a finite automaton whose transitions are labelled both by letters and by integer weights. Formally, an OCN is 
a tuple $\cA=\tup{\Sigma,Q,s_0,\delta,F}$ where $\Sigma$ is a finite alphabet,  $Q$ is a finite set of states, $s_0\in Q$ is the initial  state, $\delta\subseteq Q\times\Sigma\times \bbZ\times  Q$ is the transition relation, and $F\subseteq Q$ are the accepting states. We say that an OCN is \emph{deterministic} if for every $s\in Q,\sigma\in \Sigma$, there is at most one transition $(s,a,e,s')$ for any $e\in \bbZ$ and $s'\in Q$.

For a transition $t=(s,a,e,s')\in\delta$ we define $\effect{t}= e$
to be its (counter) \emph{effect}.

A \emph{path} in the OCN is a sequence $\pi=(s_1,\sigma_1,e_1,s_2)(s_2,\sigma_2,e_2,s_3)\dots (s_{k},\sigma_k,e_k,s_{k+1})\in\delta^*$. Such a path $\pi$ is a \emph{cycle} if $s_1=s_{k+1}$, and is a \emph{simple cycle} if no other cycle is a proper infix of it.
We say that the path $\pi$ \emph{reads} the word $\sigma_1\sigma_2\dots \sigma_k\in\Sigma^*$. The \emph{effect} of $\pi$ is $\effect{\pi}=\sum_{i=1}^k e_i$, and its \emph{nadir}, denoted $\nadir{\pi}$, is the minimal effect of any prefix of $\pi$ (note that the nadir is non-positive, since $\effect{\epsilon}=0$).

A \emph{configuration} of an OCN is a pair $(s,v)\in Q\times \bbN$ comprising a state and a non-negative integer. 
For a letter $\sigma\in \Sigma$ and configurations $(s,v),(s',v')$ we write $(s,v)\step{\sigma}(s',v')$ if there exists $d\in \bbZ$ such that $v'=v+d$ and $(s,\sigma,d,s')\in \delta$.

A \emph{run} of $\cA$ from initial counter $c$ on a word $w=\sigma_1 \cdots \sigma_k\in\Sigma^*$ is a sequence of configurations 
$\rho=(q_0,v_0),(s_1,v_1),\ldots ,(s_k,v_k)$
such that $v_0=c$ and for every $1\le i\le k$ it holds that $(s_{i-1},v_{i-1})\step{\sigma_i}(s_i,v_i)$. Since configurations may only have a non-negative counter, this enforces that the counter does not become negative.

Note that every run naturally induces a path in the OCN. For the converse, a path $\pi$ induces a run from initial counter $c$ iff $c\ge -\nadir{\pi}$ (indeed, the minimal initial counter needed for traversing a path $\pi$ is exactly $-\nadir{\pi}$). 
We extend the definitions of effect and nadir to runs, by associating them with the corresponding path.

The run $\rho$ is \emph{accepting} if $s_k\in F$, and we say that $\cA$ \emph{accepts} $w$ with initial counter $c$ if there exists an accepting run of $\cA$ on $w$ from initial counter $c$. We define $\lang(\cA,c)=\{w\in \Sigma^*: \cA\text{ accepts $w$ with initial counter }c\}$.
Observe that OCNs are monotonic -- if $\cA$ accepts $w$ from counter $c$, it also accepts it from every $c'\ge c$. Thus, $\lang(\cA,c)\subseteq \lang(\cA,c')$ for $c'\ge c$.

\section{OCN Determinization}
\label{sec:determinization}

In this section we examine the relationship between the four determinization notions. For brevity, we use the same term for the decision problems and the properties they represent, e.g., we say ``$\cA$ is \zerodet'' if $\cA$ has an equivalent DOCN under \zerodet.

Our first observation is that the definitions are comparable in their strictness:
\begin{observation}
\label{obs:definitionImplications}
Consider an OCN $\cA$. If $\cA$ is \unidet, then $\cA$ is \foralldet, if $\cA$ is \foralldet, then $\cA$ is \zerodet, and if $\cA$ is \zerodet, then $\cA$ is \existsdet.  
\end{observation}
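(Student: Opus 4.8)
The plan is to verify the three implications separately, working from the two that hold by mere instantiation toward the one that requires a small construction. Throughout I rely only on the definitions of the four notions and on the monotonicity remark from the preliminaries.

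The implications \textbf{\foralldet $\Rightarrow$ \zerodet} and \textbf{\zerodet $\Rightarrow$ \existsdet} are immediate. For the first, \foralldet supplies, for \emph{every} $c\in\bbN$, a DOCN witnessing $\lang(\cA,c)=\lang(\cD,0)$; instantiating $c=0$ is precisely the statement of \zerodet. For the second, a \zerodet witness $\cD$ satisfying $\lang(\cA,0)=\lang(\cD,0)$ already serves as an \existsdet witness under the choice $c=0$. Neither step uses any property of OCNs beyond the shape of the definitions.

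The only implication requiring work is \textbf{\unidet $\Rightarrow$ \foralldet}, and here I would use the construction alluded to in the footnote: simulating a fixed positive initial counter by a DOCN that starts at $0$ and adds that counter on its first move. Fix the \unidet witness $\cD=\tup{\Sigma,Q,s_0,\delta,F}$, so that $\lang(\cA,c)=\lang(\cD,c)$ for all $c$, and fix an arbitrary $c\in\bbN$. I would build a DOCN $\cD'$ by adjoining a \emph{fresh} initial state $\hat s_0$ whose outgoing transitions mirror those of $s_0$ with each effect shifted by $+c$: for every $(s_0,\sigma,e,s')\in\delta$ I add $(\hat s_0,\sigma,e+c,s')$, and I declare $\hat s_0$ accepting exactly when $s_0\in F$. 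The claim is $\lang(\cD',0)=\lang(\cD,c)$. The crucial point is that $\hat s_0$ has no incoming transitions, hence it occurs only as the very first configuration, so the offset $+c$ is applied exactly once. Reading a first letter $\sigma$ from $(\hat s_0,0)$ requires $e+c\ge 0$, which is exactly the non-negativity condition for the first step of $\cD$ from $(s_0,c)$, and it leaves the counter at $e+c=c+e$, matching $\cD$ after its first step; from then on the two runs coincide, and the empty word is handled correctly by the acceptance condition on $\hat s_0$. Determinism is preserved because $\cD$ is deterministic, so $\hat s_0$ inherits at most one transition per letter. Combining the two equalities gives $\lang(\cA,c)=\lang(\cD,c)=\lang(\cD',0)$, and since $c$ was arbitrary this establishes \foralldet.

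I expect the only genuine (albeit minor) obstacle to be the correctness of this simulation, specifically the twin bookkeeping facts that force the fresh initial state: the increment must be applied once and only once — which is why I shift effects on a new state $\hat s_0$ rather than editing $s_0$ in place, since the latter would re-add $c$ on every revisit of $s_0$ — and the nadir/non-negativity constraint on the first transition must line up so that $\lang(\cD',0)$ and $\lang(\cD,c)$ agree even on words whose shortest accepting runs are tight against the counter bound.
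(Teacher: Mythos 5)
Your proof is correct and matches the paper's intended argument: the paper states this as an observation without proof, with the middle and last implications following by instantiating $c=0$ in the definitions, and the implication \unidet $\Rightarrow$ \foralldet resting on exactly the counter-shifting simulation sketched in the paper's footnote (``any fixed initial counter $c$ can be simulated with a $0$ counter, by initially adding $c$''). Your fresh-initial-state construction is a correct and careful implementation of that footnote, including the point that the offset must be applied exactly once.
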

Next, we prove that none of the definitions coincide. Following~\cref{obs:definitionImplications}, it suffices to prove the following.
\begin{lemma}
\label{lem:definitionsNotCoincide}
There exist OCNs $\cA,\cB,\cC$ such that $\cA$ is \existsdet but not \zerodet, $\cB$ is \zerodet but not \foralldet, and $\cC$ is \foralldet but not \unidet.
\end{lemma}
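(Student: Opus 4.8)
The plan is to treat the three separations independently, but to reduce the first two to a single phenomenon: the existence of an OCN language that is inherently nondeterministic. Concretely, I will fix one \emph{witness} OCN $\cN$ with $\lang(\cN,0)=L$, where $L=\{a^nb^m : m\le n \text{ or } m \text{ is even}\}$, and argue that $L$ is not recognized by any DOCN from any fixed initial counter. Intuitively, after reading $a^n$ a DOCN can compare $m$ against $n$ only by using its counter, which kills the run at $m=n+1$ and then prevents it from ever accepting the even $m>n$; yet keeping the run alive loses the comparison. All three ``negative'' directions of the lemma will ultimately rest on this kind of lower bound.

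For $\cA$ (\existsdet but not \zerodet) I augment $\cN$ with a universal escape: a fresh accepting sink $u$ with $0$-weight self-loops on every letter, and, from every state, a transition on each letter with effect $-1$ into $u$. From counter $0$ the escape edges are disabled, so $\lang(\cA,0)=\lang(\cN,0)=L$ is non-DOCN, witnessing that $\cA$ is not \zerodet. From counter $1$ the very first letter can be spent to reach $u$ and then loop forever, so $\lang(\cA,1)=\Sigma^{+}$ (or $\Sigma^{*}$), which is regular and hence DOCN; this witnesses \existsdet. For $\cB$ (\zerodet but not \foralldet) I do the dual: a new accepting initial state $s_0'$ whose only transition, on a fresh letter $\#$, has effect $-1$ and enters $\cN$. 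From counter $0$ this edge is disabled, so $\lang(\cB,0)=\{\epsilon\}$ is trivially DOCN, giving \zerodet; from counter $1$ we get $\lang(\cB,1)=\{\epsilon\}\cup \#L$, and stripping the forced prefix $\#$ shows this is non-DOCN, so $\cB$ is not \foralldet.

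For $\cC$ (\foralldet but not \unidet) I use an OCN in which every cycle has strictly negative effect, so that $\lang(\cC,c)$ is \emph{finite} for every $c$ and hence regular---giving \foralldet immediately (a DFA is a DOCN with all-zero weights). Concretely $\cC$ nondeterministically commits, on its first letter, to one of two modes: mode $A$ charges $-1$ per $a$ and $-2$ per $b$, mode $B$ charges $-2$ per $a$ and $-1$ per $b$. Then the minimal initial counter accepting $w$ is $\theta_\cC(w)=\min(\#_a+2\#_b,\,2\#_a+\#_b)=|w|+\min(\#_a(w),\#_b(w))$. To refute \unidet I use that for a DOCN the run on each word is unique, so $w\in\lang(\cD,c)$ iff $c\ge -\nadir{r(w)}$, where $r(w)$ is that run; matching $\cC$ thus forces $-\nadir{r(w)}=\theta_\cC(w)$ for all $w$. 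The run of $\cD$ on $a^n$ is eventually periodic, hence its effect is affine and its nadir linear in $n$; matching $\theta_\cC(a^n)=n$ pins the $a$-cycle effect, and then matching $\theta_\cC(a^nb^k)$---which has slope $2$ in $k$ for $k\le n$ but slope $1$ for $k\ge n$---forces the nadir contributed by the deterministic $b$-suffix to equal both $-2k$ and $-(n+k)$, which is impossible since that contribution is independent of $n$.

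The routine directions are the explicit positive constructions above; the crux, and the main obstacle, is the family of non-DOCN lower bounds (for $L$, for $\#L$, and for the family $\{\lang(\cC,c)\}_c$). I expect all of them to follow a common template exploiting determinism: the unique run on a unary block $a^n$ is eventually periodic, so the state it reaches and the shape of the subsequent run are independent of $n$ up to a fixed period, while its accumulated effect is affine in $n$. Acceptance of the continuation is then governed solely by (i) whether the counter survives---controlled by the sign of the relevant cycle effect---and (ii) an $n$-independent, eventually periodic state pattern. A case split on that cycle effect (negative: the run dies at a finite point and cannot accept arbitrarily long continuations; non-negative: acceptance is eventually periodic and $n$-independent) contradicts the $n$-dependent acceptance pattern demanded by $L$ and by $\theta_\cC$. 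Isolating this template as a reusable lemma about deterministic OCNs is the part I would write most carefully.
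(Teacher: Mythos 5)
Your overall strategy mirrors the paper's quite closely: three separate witnesses, an escape sink that is reachable only at positive counter for the \existsdet/\zerodet separation, a $-1$-guarded entry edge for the \zerodet/\foralldet separation, and a pump-the-cycle / eventual-periodicity argument against a hypothetical uniform DOCN for $\cC$. The concrete witness languages differ (your $\{a^nb^m : m\le n \text{ or } m \text{ even}\}$ versus the paper's two-branch prefix-counting gadget over $\{a,b,c\}$, and your all-negative two-mode $\cC$ versus the paper's $+1/-1$ gadget), but the lower-bound template you isolate at the end --- the deterministic run on a unary block is eventually periodic, then case-split on the sign of the relevant cycle (negative: the run dies and loses arbitrarily long continuations; non-negative: acceptance becomes eventually $n$-independent) --- is exactly the argument the paper runs in each case.

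There is one step that fails as written: in $\cA$ you add an effect-$(-1)$ escape edge into the accepting sink $u$ \emph{from every state}, and claim all escape edges are disabled from initial counter $0$. They are not: any OCN recognizing your $L$ must let its counter become positive while reading $a^n$ (it has to record $n$ somewhere), and from that moment on the escape edge out of the current state is enabled, so for instance $aaba\in\lang(\cA,0)\setminus L$ and $\lang(\cA,0)\neq L$. The fix is the one the paper uses: guard the escape by a single edge out of the initial state only, taken on the very first letter (most cleanly, on a fresh initial letter $\#$ with the rest of the word routed through $\cN$ at cost $0$), so the escape can never be reached once the computation has entered $\cN$. With that repair, and with the non-DOCN lower bound for $L$ actually carried out (your template does go through: a non-negative $b$-cycle makes acceptance of $a^nb^m$ eventually independent of $n$ within a residue class, contradicting the threshold at $m=n$ for odd $m$; a negative $b$-cycle kills the run and loses the arbitrarily large even $m$), the proof is correct. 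One further small point worth writing carefully in the $\cC$ argument: the $b$-suffix contribution to the nadir is independent of $n$ only once the $a$-run has entered its cycle and only up to the residue of $n$ modulo the cycle length, so fix a residue class before letting $n$ grow.
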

\begin{proof}[Proof (sketch)]
The OCNs $\cA,\cB,\cC$ are depicted in \cref{fig:separation}. We demonstrate the intuition on $\cC$, see \cref{app:definitionsNotCoincide} for the complete proof. To show that $\cC$ is \foralldet, we observe that for initial counter $0$, we can omit the $(\#,-1)$ transition, thus obtaining an equivalent DOCN. For initial counter $c\ge 1$ we have that $\lang(\cC,c)=\#\cdot \{a,b\}^*$ is regular and thus has a DOCN.

We claim $\cC$ is not \unidet. An equivalent DOCN $\cD$ with $k$ states, starting from initial counter $0$, must accept the word $\#a^{k+1}b^{k+1}$. It is easy to show that upon reading $b^{k+1}$ it must make a negative cycle. This, however, causes some word of the form $\#a^{k+1}b^m$ not to be accepted even with counter 1, which means $\lang(\cD,1)\neq \lang(\cC,1)$. 
\end{proof}

\begin{figure}[ht]
\captionsetup[subfigure]{justification=centering}
\begin{subfigure}[b]{.5\linewidth}
\centering \includegraphics[page=1]{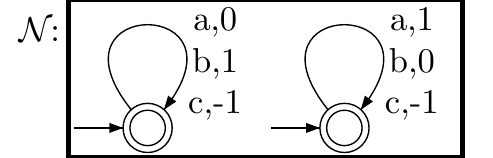}
\caption{Gadget OCN $\cN$.}\label{fig:gadget}
\end{subfigure}%
\begin{subfigure}[b]{.5\linewidth}
\centering \includegraphics[page=2]{graphics/detSeparation-pics.pdf}
\caption{\existsdet but not \zerodet.}\label{fig:existsDetNotZeroDet}
\end{subfigure}\\
\begin{subfigure}[b]{.4\linewidth}
\centering \includegraphics[page=3]{graphics/detSeparation-pics.pdf}
\caption{\zerodet but not \foralldet.}\label{fig:zeroDetnotForallDet}
\end{subfigure}%
\begin{subfigure}[b]{.5\linewidth}
\centering \includegraphics[page=4]{graphics/detSeparation-pics.pdf}
\caption{\foralldet but not \unidet.}\label{fig:forallDetnotUniDet}
\end{subfigure}
\caption{Examples separating the determinization notions.}\label{fig:separation}
\end{figure}

\section{Lower Bounds for Determinization}
\label{sec:general}
In this section we prove lower bounds for the four determinization decision problems.
We show that \zerodet, \foralldet, and \existsdet are undecidable, while for \unidet we show an Ackermannian lower bound, and its decidability remains an open problem.

We start by introducing \emph{Lossy Counter Machines} (LCMs)~\cite{mayr2003undecidable,schnoebelen2010lossy}, from which we will obtain some undecidability results. Intuitively, an LCM is a Minsky counter machine, whose semantics are such that counters may arbitrarily decrease at each step.
Formally, an LCM is $\cM=\tup{\loc,\counters,\Delta}$ where $\loc=\{\ell_1,\ldots,\ell_m\}$ is a finite set of \emph{locations}, $\counters=(z_1,\ldots, z_n)$ are $n$ counters, and $\Delta\subseteq \loc\times \ops(\counters)\times \loc$, where $\ops(\counters)=\counters\times \{\inc,\dec,\jz\}$.

A \emph{configuration} of $\cM$ is $\tup{\ell,\vec{a}}$ where $\ell\in \loc$ and $\vec{a}=(a_1,\ldots,a_n)\in \bbN^{\counters}$. There is a \emph{transition} $\tup{\ell,\vec{a}}\to \tup{\ell,\vec{b}}$ if there exists $\op\in \ops$ and either:
\begin{itemize}
    \item $\op=c_k\inc$ and $b_k\le a_k+1$ and $b_j\le a_j$ for all $j\neq k$, or
    \item $\op=c_k\dec$ and $b_k\le a_k-1$ and $b_j\le a_j$ for all $j\neq k$, or
    \item $\op=c_k\jz$ and $b_k=a_k=0$ and $b_j\le a_j$ for all $j\neq k$.
\end{itemize}
Since we only require $\le$ on the counter updates, the counters nondeterministically decrease at each step.

A \emph{run} of $\cM$ is a finite sequence $\tup{\ell_1,\vec{a_1}}\to \tup{\ell_2,\vec{a_2}}\to\ldots\to \tup{\ell_r,\vec{a_r}}$. Given a configuration $\tup{\ell,\vec{a}}$, the \emph{reachability set} of $\tup{\ell,\vec{a}}$ is the set of all configurations reachable from $\tup{\ell,\vec{a}}$ via runs of $\cM$.

In~\cite{schnoebelen2010lossy}, it is shown that the problem of deciding whether the reachability set of a configuration is finite, is undecidable. A slight modification of this problem (see~\cref{app:finFromZeroUndecidable}) yields the following.

\begin{lemma} \label{lem:finFromZeroUndecidable}
The following problem, dubbed \finreachzero, is undecidable:
Given an LCM $\cM$ and a location $\ell_0$, decide whether the reachability set of $\tup{\ell_0,(0,\ldots, 0)}$ is finite.
\end{lemma}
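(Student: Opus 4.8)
The plan is to reduce from the general finite-reachability problem \finreach, which asks, given an LCM $\cM$ and an \emph{arbitrary} configuration $\tup{\ell,\vec{a}}$, whether the reachability set of $\tup{\ell,\vec{a}}$ is finite; this is exactly the problem shown undecidable in~\cite{schnoebelen2010lossy}. The only gap between \finreach and \finreachzero is that the latter fixes the initial counters to $\vec{0}$ and merely specifies a starting location. I would therefore build from $\cM$ a new LCM $\cM'$ together with a fresh location $\ell_0$, so that the reachability set of $\tup{\ell_0,\vec{0}}$ in $\cM'$ is finite if and only if the reachability set of $\tup{\ell,\vec{a}}$ in $\cM$ is finite.

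The construction augments $\cM$ with a \emph{setup gadget}: a simple chain of $\sum_k a_k$ fresh locations, starting at $\ell_0$, whose transitions perform, for each counter $z_k$, exactly $a_k$ increments $z_k\inc$, and which finally hands control to $\ell$, with all original transitions of $\cM$ left untouched. A loss-free traversal of the gadget raises the counters from $\vec{0}$ to exactly $\vec{a}$ and enters $\ell$; such a traversal is always available, since an $\inc$ step permits the non-lossy choice. Hence $\tup{\ell,\vec{a}}$ --- and therefore every configuration reachable from it in $\cM$ --- lies in $\mathrm{Reach}_{\cM'}(\tup{\ell_0,\vec{0}})$. This already yields one direction: if $\mathrm{Reach}_{\cM'}(\tup{\ell_0,\vec{0}})$ is finite then so is $\mathrm{Reach}_{\cM}(\tup{\ell,\vec{a}})$.

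For the converse I must cope with lossiness: a traversal of the gadget may drop counters, so control can reach $\ell$ not only with $\vec{a}$ but with any $\vec{b}\le \vec{a}$ (componentwise). Consequently $\mathrm{Reach}_{\cM'}(\tup{\ell_0,\vec{0}})$ is the union of a finite set of gadget configurations (whose counters are bounded by $\vec{a}$) with $\bigcup_{\vec{b}\le \vec{a}}\mathrm{Reach}_{\cM}(\tup{\ell,\vec{b}})$. Since this union is indexed by the finite set $\{\vec{b}:\vec{b}\le \vec{a}\}$, finiteness of $\mathrm{Reach}_{\cM'}(\tup{\ell_0,\vec{0}})$ reduces to finiteness of each $\mathrm{Reach}_{\cM}(\tup{\ell,\vec{b}})$. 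Here I would invoke the standard monotonicity of lossy counter machines: whenever $\vec{b}\le \vec{a}$ and $\tup{\ell,\vec{b}}\to^* \tup{\ell',\vec{c}}$, there is $\vec{c}'\ge \vec{c}$ with $\tup{\ell,\vec{a}}\to^* \tup{\ell',\vec{c}'}$. Thus if $\mathrm{Reach}_{\cM}(\tup{\ell,\vec{a}})$ is finite, say with all reachable counters bounded by some $B$, then every $\vec{c}$ reachable from any $\tup{\ell,\vec{b}}$ satisfies $\vec{c}\le \vec{c}'\le B$, so each $\mathrm{Reach}_{\cM}(\tup{\ell,\vec{b}})$ is finite and the entire reachability set of $\tup{\ell_0,\vec{0}}$ is finite.

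The main obstacle is precisely this lossiness in the setup gadget: because the counters cannot be forced to arrive at $\ell$ with the exact value $\vec{a}$, the reduction produces not a single source configuration but the downward-closed family $\{\tup{\ell,\vec{b}}:\vec{b}\le \vec{a}\}$, and the crux is to argue that finiteness is insensitive to this slack. The monotonicity argument above is what closes this gap; everything else --- the gadget itself and the bookkeeping of its finitely many configurations --- is routine.
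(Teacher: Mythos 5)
Your proposal is correct and follows essentially the same route as the paper: a fresh initial location $\ell_0$ followed by a chain of increments that builds $\vec{a}$ and hands control to $\ell$, with correctness resting on the monotonicity of LCMs to handle the downward-closed set $\{\tup{\ell,\vec{b}}:\vec{b}\le\vec{a}\}$ of configurations the lossy gadget can actually produce. You merely spell out the monotonicity step that the paper invokes in one line; the reduction and its justification are the same.
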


\subsection{Undecidability of \zerodet} 
\label{subsec:zerodet}


We show that \zerodet is undecidable using a reduction from \finreachzero. Intuitively, given an LCM $\cM$ and a location $\ell_0$, we construct an OCN $\cA$ that accepts, from initial counter 0, all the words that do not represent runs of $\cM$ from $\tup{\ell_0,(0,\ldots,0)}$.

In order for the OCN $\cA$ to verify the illegality of a run, it guesses a violation in it. Control violations, i.e., illegal transitions between locations, are easily checked. In order to capture counter violations, $\cA$ must find a counter whose value in the current configuration is \emph{smaller} than in the next iteration (up to $\pm 1$ for $\inc$ and $\dec$ commands). 
This, however, cannot be done by an OCN, since intuitively an OCN can only check that the later number is smaller. To overcome this, we encode runs in reverse, as follows.

Consider an LCM $\cM=\tup{\loc,\counters,\Delta}$ with $\loc=\{\ell_1,\ldots,\ell_m\}$ and $\counters=(z_1,\ldots, z_n)$. We encode a configuration $\tup{\ell,(a_1,\ldots,a_n)}$ over the alphabet $\Sigma=\loc\cup \counters$ as $\ell\cdot z_1^{a_1}\cdots z_n^{a_n}\in \Sigma^*$. We then encode a run by concatenating the encoding of its configurations.

For a word $w=\sigma_1\cdots \sigma_k\in \Sigma^*$, let $w^R=\sigma_k\cdots \sigma_1$ be its \emph{reverse}, and for a language $L\subseteq \Sigma^*$, define its reverse to be $L^R=\{w^R:w\in L\}$.

We now define 
$
L_{\cM,\ell_0}=\{w\in \Sigma^*: w \text{ encodes a run of $\cM$ from }\tup{\ell_0,(0,\ldots,0)}\}.
$






We are now ready to describe the construction of $\cA$.
\begin{lemma} \label{lem:lcmToOcnConstruction}
Given an LCM $\cM$ and a location $\ell_0$, we can construct an OCN $\cA$ such that $\lang(\cA,0)=\overline{L_{\cM,\ell_0}^R}$. 
\end{lemma}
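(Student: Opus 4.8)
The plan is to build $\cA$ as a nondeterministic union of ``checker'' components, one for each way a word can fail to encode a valid run; $\cA$ will accept from initial counter $0$ exactly when at least one checker succeeds. Since an OCN accepts a word iff \emph{some} run on it is accepting, this union matches the existential nature of ``$u^R$ is not a valid encoding''. The key reason the reversal helps is this: reading $u$ left-to-right presents the configurations of the run $u^R=C_1\cdots C_r$ in reverse order $C_r,\ldots,C_1$, so for each step $C_i\to C_{i+1}$ the \emph{later} configuration $C_{i+1}$ is seen before the \emph{earlier} $C_i$. This is precisely what lets a monotone, non-negative counter with no zero-test detect that some counter \emph{grew} more than a transition permits.

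Before describing the checkers I would normalize $\cM$ so that the operation is a function of the source location, i.e.\ every rule leaving $\ell$ carries the same $\op(\ell)\in\ops(\counters)$. This is the standard Minsky normal form, imposable by splitting locations, and it preserves (non-)finiteness of the reachability set from $\tup{\ell_0,\vec 0}$, so \finreachzero is unaffected. Its purpose is that with the operation fixed by $\ell$, the predicate ``the step $C_i\to C_{i+1}$ is invalid'' becomes a \emph{disjunction} -- either $\ell_{i+1}$ is not an allowed $\op(\ell_i)$-successor of $\ell_i$ (a control violation), or the counter constraint of $\op(\ell_i)$ fails (a counter violation) -- rather than a conjunction over all candidate rules, which nondeterminism cannot test.

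The components are then: (i) a finite-state checker for \emph{structural} violations (the word, read in reverse, is not a concatenation of blocks $z_n^{*}\cdots z_1^{*}\,\ell$) and for the \emph{initial} violation (the block corresponding to $C_1$, read last, is not exactly $\ell_0$ with all counters zero); (ii) a finite-state checker for \emph{control} violations, guessing two consecutive locations and verifying $(\ell_i,\op(\ell_i),\ell_{i+1})\notin\Delta$; and (iii) the \emph{counter} checker, the only part using the counter. Starting from $0$, it scans with zero-effect self-loops to a guessed step, guesses an index $j$, increments once per $z_j$ while reading the later configuration $C_{i+1}$ (loading $b_j:=a_j^{(i+1)}$), reads on to the earlier $C_i$, decrements once per $z_j$ (subtracting $a_j:=a_j^{(i)}$), and upon reading $\ell_i$ applies a fixed offset determined by $\op(\ell_i)$ and whether $j$ is the operated counter: offset $1$ to test $b_j>a_j$ for an untouched counter, offset $2$ for $b_k>a_k+1$ under $\inc$, offset $0$ for $b_k\ge a_k$ under $\dec$; the two $\jz$-violations $a_k\ge1$ or $b_k\ge1$ need no counter and fold into the finite-state checkers. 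The net effect of this load-offset-subtract path is $b_j-a_j-\text{offset}$, so the path stays non-negative (and can then reach an accepting sink that loops on $\Sigma$ with effect $0$) \emph{iff} the corresponding constraint is violated.

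Correctness splits into two directions. If $u^R\notin L_{\cM,\ell_0}$ then some violation exists, the matching checker guesses it, and the counter path (when relevant) is feasible from $0$, so $\cA$ accepts. Conversely, if $u^R\in L_{\cM,\ell_0}$ then no structural, initial, or control violation exists, and for every step and counter the constraint holds, so every counter-checker path ends with $b_j-a_j-\text{offset}<0$ and is therefore \emph{blocked}; hence no run accepts. The main obstacle is exactly this last point: converting ``a counter increased too much'' -- an inequality in the wrong direction for a monotone, zero-test-free counter -- into \emph{feasibility} of an OCN run from counter $0$. The reversal of the encoding, the per-operation offsets, and the normalization making invalidity existential are what overcome it; the rest (tracking block boundaries, counter blocks, and the guessed index in the finite control) is routine bookkeeping.
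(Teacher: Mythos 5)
Your proposal is correct and follows essentially the same route as the paper: reverse the encoding so the later configuration of each step is read first, and build $\cA$ as a nondeterministic union of a regular checker for structural/control/initial-location violations, an initial-counter checker, and per-step counter checkers that load $b_j$, subtract $a_j$, and apply a constant offset determined by the operation (your offsets $1,2,0$ match the paper's $\nu=-1,-2,0$), with feasibility from counter $0$ equivalent to the violation. The only substantive difference is your preliminary normalization making $\op$ a function of the source location so that invalidity of a step is a disjunction rather than a conjunction over competing rules between the same pair of locations --- a subtlety the paper's construction (which attaches a violation checker to each individual transition of $\Delta$) passes over silently, so your version is if anything the more careful one.
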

\begin{proof}[Proof sketch:]
We construct $\cA$ such that it accepts a word $w$ iff $w^R$ does not describe a run of $\cM$ from $\tup{\ell_0,(0,\ldots, 0)}$.

As mentioned above, $\cA$ reads $w$ and guesses when a violation would occur, where control violations are relatively simple to spot, by directly encoding the structure of $\cM$ in $\cA$.

In order to spot counter violations, namely two consecutive configurations $\tup{\ell,(a_1,\ldots, a_n)}$ and $\tup{\ell',(a'_1,\ldots, a'_n)}$ such that some $a'_i$ is too large compared to its counterpart $a_i$ (how much larger is "too large" depends on $\cM$'s transitions), $\cA$ reads a configuration $\ell\cdot z_1^{a_1}\cdots z_n^{a_n}$ and increments its counter to count up to $a_i$, if it guesses that $z_i$ is the counter that violates the transition. 
Assume for simplicity that the command in the transition does not involve counter $z_i$, then upon reading the next configuration $\ell'\cdot z_1^{b_1}\cdots z_n^{b_n}$, $\cA$ decrements its counter while reading $z_i$, so that the counter value is $a_i-b_i$. Then, $\cA$ takes another transition with counter value $-1$. Since the configuration is reversed, if this is indeed a violation, then $a_i>b_i$ (since the counters are lossy), so $a_i-b_i-1\ge 0$, and $\cA$ accepts. Otherwise, $a_i\le b_i$, so this run of $\cA$ cannot proceed.

In \cref{app:lcmToOcnConstruction} we give the complete details of the construction.
\end{proof}

The correctness of the construction is proved in the following lemmas.

\begin{lemma} \label{lem:finReachThendet}
Consider an LCM $\cM$ and a location $\ell_0$, and let $\cA$ be the OCN constructed in \cref{lem:lcmToOcnConstruction}. If $(\cM,\ell_0)$ is in $\finreachzero$, then $\cA$ is $\zerodet$.
\end{lemma}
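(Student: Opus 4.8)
The plan is to show that the hypothesis --- finiteness of the reachability set --- forces $L_{\cM,\ell_0}$ to be a \emph{regular} language, and then to exploit the fact that every regular language is the language of a DOCN. Since $\lang(\cA,0)=\overline{L_{\cM,\ell_0}^R}$ by \cref{lem:lcmToOcnConstruction}, and regular languages are closed under reversal and complementation, $\lang(\cA,0)$ will be regular, and a DFA recognizing it (reinterpreted as a DOCN) witnesses that $\cA$ is \zerodet. So the whole argument reduces to the regularity of $L_{\cM,\ell_0}$, with the rest being routine closure and reinterpretation.

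For the regularity claim, let $R$ denote the reachability set of $\tup{\ell_0,(0,\ldots,0)}$, which is finite by assumption. Every configuration $\tup{\ell,(a_1,\ldots,a_n)}\in R$ is encoded by the single word $\ell\cdot z_1^{a_1}\cdots z_n^{a_n}$, so the set $E$ of configuration encodings is a finite set of words of bounded length. A word $w$ lies in $L_{\cM,\ell_0}$ iff it parses as a concatenation $w=w_1 w_2\cdots w_r$ of blocks $w_i\in E$ encoding configurations $c_1,\ldots,c_r$ with $c_1=\tup{\ell_0,(0,\ldots,0)}$ and $c_i\to c_{i+1}$ a transition of $\cM$ for each $i$. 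The parsing is deterministic, since each block begins with its (unique) location letter and contains no further location letters.

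I would then build a finite automaton for $L_{\cM,\ell_0}$ that reads $w$ block by block, storing in its state the last configuration of $R$ read so far, and upon reading a new block verifies both that the block lies in $E$ and that it is a valid $\cM$-successor of the stored configuration; any malformed or invalid input is sent to a rejecting sink. As $|R|<\infty$ and each block has bounded length, finitely many states suffice, so $L_{\cM,\ell_0}$ is regular. Closure of regular languages under reversal and complement then gives that $\lang(\cA,0)=\overline{L_{\cM,\ell_0}^R}$ is regular. Taking any complete DFA $\cD_0$ for this language and labelling every transition with weight $0$ yields a DOCN $\cD$; since the counter never changes, $\lang(\cD,0)$ equals the language of $\cD_0$, namely $\lang(\cA,0)$, so $\cA$ is \zerodet.

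The only genuine obstacle is the regularity claim, and within it the one point demanding care is that finiteness of the \emph{reachability set} is exactly what is needed: it bounds the set of configuration encodings $E$, even though the set of runs (and hence $L_{\cM,\ell_0}$ itself) may be infinite due to cycles in the reachability graph. Had we only known that every individual run is finite, or that runs have bounded length, the argument would differ; it is precisely the finiteness of reachable configurations that keeps the recognizing automaton finite. Everything downstream --- reversal, complementation, and viewing a DFA as a weight-$0$ DOCN --- is standard.
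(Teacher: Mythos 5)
Your proposal is correct and follows essentially the same route as the paper: finiteness of the reachability set bounds the counter values, so the set of legal run encodings is recognized by a finite automaton tracking the last configuration read, and the complement (a regular language) is then realized by a DFA reinterpreted as a weight-$0$ DOCN. The only cosmetic difference is that you build the automaton for the forward language $L_{\cM,\ell_0}$ and invoke closure under reversal, whereas the paper constructs the DFA directly on the reversed encoding by checking backward transitions of $\cM$.
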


\begin{proof}

Assume the reachability set of $\tup{\ell_0,(0 \ldots 0)}$ is finite under $\cM$. We will prove that $\overline{\lang(\cA,0)}$ is regular. This results in $\lang(\cA,0)$ being regular, since regular languages are closed under complementation, hence there is a DFA that accepts $\lang(\cA,0)$, which is also a DOCN that accepts $\lang(\cA,0)$ (regardless of initial counter value).

Since the reachability set of $\tup{\ell_0,(0 \ldots 0)}$ is finite, all counter values are bounded in all legal runs. Let $m$ be an upper bound for all counters in all legal runs. We now construct a DFA $\cD$ that accepts all words that correspond to legal runs of $\cM$.

Intuitively, our construction method is the following: The states of $\cD$ are of the form $\tup{\ell,a_{1} \ldots a_{k}, b_{1} \ldots b_{k}}$ where $\ell$ is a state of $\cM$ and $0 \leq a_{1} \ldots a_{k}, b_{1} \ldots b_{k} \leq m$. 

Within the state, $a_{1} \ldots a_{k}$ represent the current counter values (that are already fully known), while $b_{1} \ldots b_{k}$ represent the counters of the next configuration, which $\cD$ is in the process of reading. We also add an initial state $\tup{\bot, \bot \ldots \bot, 0 \ldots 0}$, and all states of the form $\tup{\bot, \bot \ldots \bot, b_{1} \ldots b_{k}}$. 

From state $\tup{\ell,a_{1} \ldots a_{k}, 0 \ldots 0}$, $\cD$ can read a word of the form $z_k^*\cdot z_{k-1}^*\cdots z_1^*$, while updating the $b_1,\ldots,b_k$ components to accumulate the counters (up to $m$). When a letter $\ell'$ is read from state $\tup{\ell,a_{1} \ldots a_{k} , b_{1} \ldots b_{k}}$, then $\cD$ has a transition to $\tup{\ell',b_{1} \ldots b_{k} , 0 \ldots 0}$ iff the configuration $\tup{\ell,a_{1} \ldots a_{k}}$ can be reached from $\tup{\ell',b_{1} \ldots b_{k}}$ in a single transition of $\cM$. 

Finally, the only accepting state of $\cD$ is $\tup{\ell_0,0 \ldots 0 , 0 \ldots 0}$. 
We give the formal construction of $\cD$ in \cref{app:finReachThendet}.

The correctness of the construction is immediate, as $\cD$ simply tracks legal runs, and the only technicality is dealing with the encoding. Thus, $\lang(\cA,0)$ is regular, and $\cA$ is $\zerodet$.

\end{proof}

\begin{lemma} \label{lem:detThenFinReach}
Consider an LCM $\cM$ and a location $\ell_0$, and let $\cA$ be the OCN constructed in \cref{lem:lcmToOcnConstruction}. If $(\cM,\ell_0)$ is not in $\finreachzero$, then $\cA$ is not $\zerodet$.
\end{lemma}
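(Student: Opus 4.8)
The plan is to prove the contrapositive in its natural direction: assuming the reachability set of $\tup{\ell_0,(0,\ldots,0)}$ is \emph{infinite}, I will show that $\lang(\cA,0)=\overline{L_{\cM,\ell_0}^R}$ is not the language of any DOCN started from counter $0$. Since \cref{lem:lcmToOcnConstruction} fixes $\lang(\cA,0)$, the task is purely about the target language: I need a structural property of $\lang(\cA,0)$ that rules out deterministic recognition. The key idea is that infiniteness of the reachability set means there are legal runs of $\cM$ with \emph{arbitrarily large} counter values, and these legal runs correspond (after reversal) to words that $\cA$ must \emph{reject} from counter $0$, i.e., words in $L_{\cM,\ell_0}^R$. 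The crux is that recognizing such runs deterministically requires unbounded memory that a single monotone, non-testable counter cannot supply.

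**The pumping/counting argument.** First I would suppose toward contradiction that there is a DOCN $\cD$ with, say, $k$ states and $\lang(\cD,0)=\lang(\cA,0)$. Because the reachability set is infinite, I can pick a legal run $r$ of $\cM$ whose encoding $w_r\in L_{\cM,\ell_0}$ involves a counter value far exceeding any threshold determined by $k$ (concretely, larger than the number of states and the maximal absolute transition weight of $\cD$). Its reverse $w_r^R$ is \emph{rejected} by $\cD$ from counter $0$. Now I would exploit determinism: $\cD$ has a unique run on $w_r^R$, and this run must either get stuck (counter would go negative) or end in a non-accepting state. The goal is to massage $w_r^R$ into a \emph{different} word $w'$ that is \emph{not} in $L_{\cM,\ell_0}^R$ (so it should be \emph{accepted}), yet on which $\cD$ behaves identically — forcing $\cD$ to also reject $w'$, a contradiction. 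The standard mechanism is to find a pumpable segment of $\cD$'s run over one of the long $z_i^{a_i}$ blocks: by pigeonhole, reading a long block forces a state repetition, hence a cycle. If that cycle has nonnegative effect, pumping it up changes the word's counter-contents without changing reachability legality in a way that breaks the encoding; if it has negative effect, pumping it down (or the monotonicity of OCNs, $\lang(\cA,c)\subseteq\lang(\cA,c')$) gives the contradiction.

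**Locating the contradiction precisely.** The cleanest route is probably to mimic the separation argument used for $\cC$ in \cref{lem:definitionsNotCoincide}: there, a DOCN forced to accept $\#a^{k+1}b^{k+1}$ had to take a negative cycle on the $b$-block, which then destroyed acceptance of a nearby word. Here the analogous statement is that $\cD$, in order to correctly reject the (reversed) encodings of the infinitely many long legal runs while accepting their illegal perturbations, must take a negative cycle while reading one of the counter blocks; but a single negative cycle fixes a \emph{bounded} discriminating power, whereas the required discrimination (detecting that $a_i > b_i$ versus $a_i \le b_i$ for unboundedly large values) is unbounded. I would formalize this by extracting two words $u\in \overline{L_{\cM,\ell_0}^R}$ and $v\in L_{\cM,\ell_0}^R$ that $\cD$ cannot separate: they share the same $\cD$-run up to a pumped cycle, so $\cD$ either accepts both or rejects both, contradicting $\lang(\cD,0)=\lang(\cA,0)$.

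**Main obstacle.** The hard part will be turning ``infinite reachability set'' into a concrete family of words with the right quantitative gap relative to $\cD$'s parameters, and ensuring the pumping operation genuinely changes membership in $L_{\cM,\ell_0}^R$ (legality of the encoded $\cM$-run) while being invisible to $\cD$. In particular, pumping a $z_i$-block alters a counter value in the encoded configuration, and I must verify that this alteration converts a legal run into an illegal one (or vice versa) — this depends on the reversed, lossy semantics and on the specific way $\cA$ detects violations in \cref{lem:lcmToOcnConstruction}. Aligning the combinatorial pumping on $\cD$'s side with the semantic legality condition on $\cM$'s side is the delicate step; everything else (pigeonhole for the cycle, monotonicity of OCNs, uniqueness of the deterministic run) is routine.
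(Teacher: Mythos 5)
Your high-level strategy matches the paper's: pick a legal run of $\cM$ in which some counter reaches a value $N$ exceeding the number of states of the hypothetical DOCN $\cD$, locate a cycle of $\cD$ on the corresponding $z_i$-block by pigeonhole, and pump to produce an illegal word that $\cD$ cannot distinguish from the legal one. However, the proposal leaves unresolved exactly the two points on which the paper's proof actually turns, and the way you propose to handle one of them is wrong.

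First, the missing idea is what the paper calls \emph{positivity} of $\cD$: since every word $u$ can be extended to some $u\lambda$ that fails to encode a run (and hence must be accepted), the run of $\cD$ on \emph{any} word can never terminate because the counter would go negative. This is the structural property of $\lang(\cA,0)=\overline{L_{\cM,\ell_0}^R}$ that makes the whole argument go through: it guarantees that $\cD$'s run on the legal word $w$ traverses the entire block $a_1^N$ (so the cycle exists and the run provably ends in a non-accepting state), and that the pumped run likewise survives and lands in that same non-accepting state regardless of the cycle's sign. Your proposal never identifies this property, which is why you are forced into a case split on the sign of the cycle. Second, that case split is handled incorrectly. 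In the reversed encoding, consecutive blocks $a_1^N\ell\,\cdots a_1^{N'}\ell'$ of a legal run satisfy $N\le N'+1$, so legality is broken only by pumping \emph{up} until $N+t>N'+1$; pumping \emph{down} a negative cycle keeps the word legal (it should still be rejected), so no contradiction arises there, and ``monotonicity of OCNs'' does not rescue it. The correct argument always pumps up: the pumped word is illegal, hence must be accepted, yet $\cD$ either ends in the same non-accepting state (by positivity) or would die on the counter --- a contradiction either way. You flagged ``aligning the pumping with legality'' as the main obstacle; that obstacle is precisely where the proof lives, and the proposal does not clear it.
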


\begin{proof}

Assume the reachability set of $\tup{\ell_0,(0 \ldots 0)}$ under $\cM$ is infinite, and assume by way of contradiction that $\cA$ has a deterministic equivalent $\cD$ with $d$ states.
Observe that for every word $u\in \Sigma^*$, the run of $\cD$ does not end due to the counter becoming negative. Indeed, we can always concatenate some $\lambda\in \Sigma^*$ such that $u\lambda$ does not correspond to a run, and is hence accepted by $\cD$, so the run on $u$ must be able to continue reading $\lambda$. We call this property of $\cD$ \emph{positivity}.

Since the reachability set of $\tup{\ell_0,(0 \ldots 0)}$ is infinite, there exists a counter of $\cM$, w.l.o.g $z_1$, that can take unbounded values (in different runs). Let $w$ be a word corresponding to a run of $\cM$ that ends with the value of $z_1$ being $N$ for some $N>d$. 
We can then write $w=a_k^* \cdots a_1^N \ell a_k^* \cdots a_1^{N'} \ell'\rho$ such that $\rho$ represents the reverse of a legal prefix of a run of $\cM$, and $N'$ satisfies $N' \geq N-1$, since no single transition of $\cM$ can increase a counter by more than one (but $N'$ can be arbitrarily large).

Since $w$ corresponds to a legal run of $\cM$, $\cA$ (and therefore $\cD$) does not accept $w$. By the positiviy of $\cD$, its run on $w$ ends in a non-accepting state. 

Since $N > d$, $\cD$ goes through a cycle $\beta$ when reading $a_1^{N}$. We pump the cycle $\beta$ to obtain a run of $\cD$ on a word $w''=a_k^* \cdots a_1^{N+t} q a_k^* \cdots a_1^{N'}q'\rho$ for some $t \in \bbN$ that satisfies $N+t > N'+1$. 
Again, by the positivity of $\cD$, the run cannot end due to the counter becoming negative, so it ends in the same non accepting state as the run on $w$.
However, $w''$ does not represent a legal run of $M$, since $N+t > N'+1$, therefore $w'' \in \lang(\cA,0)$, which contradicts $\lang(\cA,0)=\lang(\cD,0)$. 
\end{proof}

Combining \cref{lem:lcmToOcnConstruction,lem:finReachThendet,lem:detThenFinReach}, we conclude the following.
\begin{theorem}
\label{thm:zerodetUndecidable}
$\zerodet$ is undecidable for OCNs over a general alphabet.
\end{theorem}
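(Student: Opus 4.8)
The plan is to prove \cref{thm:zerodetUndecidable} by combining the three lemmas that have just been established into a single many-one reduction from \finreachzero. Since \cref{lem:finFromZeroUndecidable} tells us that \finreachzero is undecidable, it suffices to exhibit a computable reduction from \finreachzero to \zerodet, i.e., a computable map taking an instance $(\cM,\ell_0)$ to an OCN $\cA$ such that $(\cM,\ell_0)\in\finreachzero$ if and only if $\cA$ is \zerodet.

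The construction of the map is exactly \cref{lem:lcmToOcnConstruction}: given $(\cM,\ell_0)$, it produces an OCN $\cA$ with $\lang(\cA,0)=\overline{L_{\cM,\ell_0}^R}$, and this construction is effective (the states and transitions of $\cA$ are built directly from the finite data of $\cM$). The first thing I would do is simply invoke this lemma to fix $\cA$. Then the correctness of the reduction is precisely the biconditional packaged by the remaining two lemmas: \cref{lem:finReachThendet} gives the forward direction, that if $(\cM,\ell_0)\in\finreachzero$ then $\cA$ is \zerodet; and \cref{lem:detThenFinReach} gives the contrapositive of the reverse direction, that if $(\cM,\ell_0)\notin\finreachzero$ then $\cA$ is not \zerodet. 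Together these two lemmas establish that $\cA$ is \zerodet if and only if $(\cM,\ell_0)\in\finreachzero$.

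Finally I would close the argument by noting that, because the map $(\cM,\ell_0)\mapsto\cA$ is computable and decides membership in \finreachzero exactly when one can decide whether $\cA$ is \zerodet, a decision procedure for \zerodet would yield a decision procedure for \finreachzero, contradicting \cref{lem:finFromZeroUndecidable}. Hence \zerodet is undecidable over a general alphabet.

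There is essentially no obstacle at this stage: the theorem is a routine assembly of the preceding lemmas, and all the genuine work — the reverse encoding of runs so that counter violations become detectable by a monotone counter (in \cref{lem:lcmToOcnConstruction}), the regularity argument exploiting boundedness of the reachability set (in \cref{lem:finReachThendet}), and the pumping argument exploiting the \emph{positivity} of any candidate DOCN (in \cref{lem:detThenFinReach}) — has already been carried out. The only point meriting a sentence of care is that \zerodet is a property of a single OCN (the existence of a DOCN $\cD$ with $\lang(\cA,0)=\lang(\cD,0)$), so I would make explicit that the undecidability transfers through the reduction because the map is total and computable and preserves the answer exactly.
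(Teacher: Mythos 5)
Your proposal is correct and matches the paper exactly: the paper proves \cref{thm:zerodetUndecidable} by simply combining \cref{lem:lcmToOcnConstruction,lem:finReachThendet,lem:detThenFinReach} into a reduction from \finreachzero, whose undecidability is given by \cref{lem:finFromZeroUndecidable}. Your additional remarks on computability of the map and the exact preservation of the answer are a slightly more explicit write-up of the same routine assembly.
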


\subsection{Undecidability of $\foralldet$}
The undecidability of \foralldet follows from that of \zerodet.
\begin{theorem}
\label{thm:foralldetUndecidable}
\foralldet is undecidable.
\end{theorem}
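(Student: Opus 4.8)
The plan is to reduce \foralldet to \zerodet, or more precisely, to reuse the very same construction from \cref{lem:lcmToOcnConstruction} and show that for the OCN $\cA$ built from an LCM $\cM$ and location $\ell_0$, the property \foralldet coincides with \zerodet. Since \cref{thm:zerodetUndecidable} already establishes that \zerodet is undecidable via this construction, showing that $\cA$ is \foralldet if and only if $(\cM,\ell_0)\in\finreachzero$ (the same characterization we have for \zerodet) immediately yields undecidability of \foralldet.

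**Key steps.** First, I would recall that by \cref{obs:definitionImplications}, if $\cA$ is \foralldet then $\cA$ is \zerodet, so one direction is free: if $(\cM,\ell_0)\notin\finreachzero$, then by \cref{lem:detThenFinReach} $\cA$ is not \zerodet, hence not \foralldet. The substantive direction is the converse: I must show that if $(\cM,\ell_0)\in\finreachzero$, then $\cA$ is \foralldet, i.e.\ for \emph{every} $c\in\bbN$ there is a DOCN $\cD_c$ with $\lang(\cA,c)=\lang(\cD_c,0)$. The point is that the argument in \cref{lem:finReachThendet} actually proves something stronger than \zerodet: when the reachability set is finite, \emph{all} legal runs have counters bounded by some $m$, so $\overline{\lang(\cA,0)}=L_{\cM,\ell_0}^R$ is regular. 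I would argue that raising the initial counter value $c$ only enlarges the language $\lang(\cA,c)$ monotonically (by the monotonicity noted in the preliminaries, $\lang(\cA,c)\subseteq\lang(\cA,c')$ for $c'\ge c$), and that for each fixed $c$ the complement $\overline{\lang(\cA,c)}$ is still a subset of the finite-support regular language $L_{\cM,\ell_0}^R$; more carefully, raising $c$ can only cause $\cA$ to accept additional words that encode ``runs'' using counter slack, but all genuine legal runs still have counters bounded by $m$, so $\overline{\lang(\cA,c)}$ remains regular (a regular language intersected with the regular set of bounded-counter encodings). Hence each $\lang(\cA,c)$ is regular and therefore has a DOCN (a DFA is a DOCN), establishing \foralldet.

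**Main obstacle.** The delicate point is verifying that $\overline{\lang(\cA,c)}$ is regular for \emph{every} $c$, not just $c=0$, since a larger initial counter gives $\cA$ more accepting power and could in principle let it detect ``violations'' against encodings of arbitrarily large counter values. I would address this by observing that the words \emph{not} in $\lang(\cA,c)$ are exactly those encoding genuine legal runs of $\cM$ together with, possibly, a bounded amount of extra counter value that $\cA$ cannot overcome; since legal runs from $\tup{\ell_0,(0,\dots,0)}$ keep all counters below $m$, the set of such words is contained in the regular set of encodings with all counter-blocks of length at most $m + c$, which is still a finite bound. Intersecting with the regular structural constraints gives regularity of $\overline{\lang(\cA,c)}$, and thus of $\lang(\cA,c)$.

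\begin{proof}
By \cref{obs:definitionImplications}, \foralldet implies \zerodet, so if $(\cM,\ell_0)\notin\finreachzero$ then \cref{lem:detThenFinReach} gives that $\cA$ is not \zerodet, hence not \foralldet. Conversely, if $(\cM,\ell_0)\in\finreachzero$, the argument of \cref{lem:finReachThendet} shows all counters are bounded by some $m$ in legal runs, so for each fixed $c\in\bbN$ the language $\overline{\lang(\cA,c)}$ is regular (a bounded-counter condition yields a finite-state acceptor), whence $\lang(\cA,c)$ is regular and admits a DOCN. Thus $\cA$ is \foralldet iff $(\cM,\ell_0)\in\finreachzero$, and since the latter is undecidable by \cref{lem:finFromZeroUndecidable}, so is \foralldet.
\end{proof}
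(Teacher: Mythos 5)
Your overall strategy (reuse the LCM-based OCN $\cA$ of \cref{lem:lcmToOcnConstruction} and show that on it \foralldet coincides with \zerodet) can be made to work, and the easy direction via \cref{obs:definitionImplications} is fine, but the substantive direction as written has a genuine gap. You claim that for every $c$ the complement $\overline{\lang(\cA,c)}$ is regular because it is \emph{contained} in the regular set of encodings whose counter blocks have length at most $m$ (or $m+c$). Containment in a regular language does not imply regularity, so this step proves nothing; to conclude, you would have to actually characterize which words $\cA$ rejects from initial counter $c\ge 1$, and that forces you to open up the gadgets of \cref{lem:lcmToOcnConstruction}. Moreover, your informal description of $\overline{\lang(\cA,c)}$ (``genuine legal runs together with a bounded amount of extra slack'') is not what the construction gives: raising the initial counter by $c$ shifts the acceptance threshold of \emph{every} violation checker by $c$, and in particular the initial-configuration checker, which accepts iff $c+\sum_j a_j-1\ge 0$, accepts every well-formed word once $c\ge 1$. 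Hence $\lang(\cA,c)=\Sigma^*$ for all $c\ge 1$, independently of whether the reachability set is finite; the complement is empty, not a nontrivial set of bounded-counter encodings. With that observation your reduction does go through ($\cA$ is \foralldet iff it is \zerodet iff $(\cM,\ell_0)\in\finreachzero$), but the argument you actually give does not establish it.

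For comparison, the paper avoids re-examining the LCM gadgets entirely by giving a generic reduction from \zerodet to \foralldet that works for an \emph{arbitrary} OCN $\cA$: prepend a fresh initial state reading $\#$, with a weight-$0$ edge into $\cA$ and a weight-$(-1)$ edge into a universal accepting sink. With counter $0$ the sink is unreachable and the language is $\#\cdot\lang(\cA,0)$; with counter $\ge 1$ the language is the regular set $\#\Sigma'^*$. This decouples \foralldet from the internals of the \zerodet construction, which is exactly the robustness your argument is missing.
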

\begin{proof}
We show a reduction from \zerodet. Given an OCN $\cA=\tup{\Sigma,Q,s_0,\delta,F}$, we construct an OCN $\cB=\tup{\Sigma',Q',q_0,\delta',F'}$ as illustrated in figure \ref{fig:forall_det_construction}. Formally, the states of $\cB$ are $Q'=Q\cup \{q_0,q_{\text{All}}\}$, the initial state is $q_0$, its alphabet is $\Sigma'=\Sigma \cup \{\#\}$ such that $\# \notin \Sigma$, its accepting states are $F'=F\cup \{q_{\text{All}}\}$, and its transition relation is $\Delta'=\Delta \cup \{(q_0,\#,-1,q_{\text{All}}),(q_0,\#,0,s_0)\} \cup \{(q_{All},\sigma,0,q_{\text{All}}) : \sigma \in \Sigma'\}$.

\begin{figure}[ht] 
\centering
\includegraphics[page=1]{graphics/otherFigs-pics}
\caption{The OCN $\cB$.}
\label{fig:forall_det_construction}
\end{figure}

We claim that $\cA$ is $\zerodet$ iff $\cB$ is $\foralldet$. For the first direction, assume $\cA$ is $\zerodet$. Thus, $\lang(\cB,0)=\#\cdot \lang(\cA,0)$ has an equivalent DOCN. Since $\lang(\cB,k)=\#\Sigma'^*$ (which has a DOCN) for all $k \geq 1$, $\cB$ is $\foralldet$.

Conversely, assume $\cA$ is not $\zerodet$. Since the transition $(q_0,\#,-1,q_{All})$ cannot be taken by $\cB$ with initial counter value 0, $\lang(\cB,0)=\left\{\#w\right\}_{w \in \lang(\cA,0)}$, hence $\cB$ is not $\zerodet$ (since a DOCN for $\lang(\cB,0)$ would easily imply a DOCN for $\lang(\cA,0)$). Thus, $\cB$ is not $\foralldet$.  
\end{proof}

\subsection{Undecidability of $\existsdet$}
We show the undecidability of $\existsdet$ with a reduction from the halting problem for two-counter (Minsky) machines (2CM). 
Technically, we rely on a construction from~\cite{almagor2020parametrized}, which reduces the latter problem to the ``parameterized universality'' problem for OCN. For our purpose, the reader need not be familiar with Minsky Machines, as it suffices to know that their halting problem is undecidable~\cite{minsky1967computation}.
We start the reduction with the following property.
\begin{theorem}[\cite{almagor2020parametrized}]
\label{thm:parameterizedUniversality}
Given a 2CM $\cM$, we can construct an OCN $\cB$ over alphabet $\Sigma\cup \{\#\}$ with $\#\notin \Sigma$ such that the following holds:
\begin{itemize}
    \item If $\cM$ halts, there exists $c\in \bbN$ such that $\lang(\cB,c)=\Sigma^*$,
    \item If $\cM$ does not halt, then for every $c\in \bbN$ there exists a word $w_c\in (\Sigma\cup \{\#\})^*$ such that every run of $\cB$ on $w_c$ ends in a state from which reading any word of the form $\#^*$ does not lead to an accepting state.
\end{itemize}
\end{theorem}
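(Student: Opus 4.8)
The plan is to reduce the halting problem for 2CMs to the stated property by building $\cB$ as a nondeterministic ``error detector'' for encoded runs of $\cM$, in the same spirit as the OCN of \cref{lem:lcmToOcnConstruction}, but adapted to the non-lossy, two-counter setting. First I would fix an encoding of configurations $\tup{\ell,x,y}$ of $\cM$ over $\Sigma$ (a location symbol followed by unary blocks for the two counters), and encode a run as the concatenation of its configurations. As in the LCM construction, I would write each configuration in reverse, so that a single OCN counter can compare a counter's value in one configuration against its value in the adjacent one by loading and subtracting. The symbol $\#$ would be reserved as a run separator/terminator, and is precisely what lets us phrase the refined condition in the second bullet.

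The core of $\cB$ is a nondeterministic machine that, reading a candidate encoding $w\in\Sigma^*$, guesses a location and a type of violation and verifies it with its single counter. Control violations (illegal transitions between locations, malformed encodings) are caught by hard-wiring $\cM$'s control graph into $\cB$. For counter violations, $\cB$ guesses a counter $z\in\{z_1,z_2\}$ and a pair of adjacent configurations, loads the value of $z$ from one and subtracts the value in the next, taking one extra $-1$ transition to certify a strict inequality exactly as in \cref{lem:lcmToOcnConstruction}. The crucial difference from the lossy case is that a \emph{faithful} 2CM run forbids a counter from being either too large or too small between steps, so both inequality directions must be detectable; one direction is natural for an OCN, and I would handle the other by using the initial counter $c$ as a global upper bound on counter values, so that ``the value exceeds $c$'' becomes its own detectable kind of violation.

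With this design I would tune $\cB$ so that the initial counter $c$ acts as a budget for the simulation: faulting a run forces the counter to absorb the magnitude of the guessed counter value, so a larger $c$ lets $\cB$ correctly refute computations with higher counter values. If $\cM$ halts, its (unique, since Minsky machines are deterministic) run is finite with counter values bounded by some $B$; choosing $c>B$, I would arrange that every word over $\Sigma^*$ is accepted from counter $c$ -- either it deviates from the faithful halting run and a violation is detected, or it agrees with it and is routed via the $\#$-gadget to an accepting sink -- yielding $\lang(\cB,c)=\Sigma^*$. If $\cM$ does not halt, no finite budget suffices: for each $c$ I would exhibit a word $w_c$ encoding a faithful partial computation whose refutation would require counter value beyond $c$, so no run of $\cB$ can fault it, and I would design the non-accepting part of $\cB$ so that every run on $w_c$ ends in a state that is ``dead for $\#^*$'', matching the second bullet.

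The main obstacle is the mismatch between the two counters of $\cM$ and the single counter of $\cB$: one counter can be tracked at the guessed violation point, but the other must be controlled globally, and it is delicate to make the initial value $c$ simultaneously serve as (i)~a budget large enough to pass the bounded halting computation and (ii)~the quantity whose insufficiency produces, in the non-halting case, the strong $\#^*$-dead witnesses required by the second bullet. Getting the $\#$-terminator gadget and the sink structure exactly right, so that the \emph{refined} second condition holds rather than mere non-universality, is where most of the care is needed; by contrast, the counter-comparison mechanics are essentially routine given \cref{lem:lcmToOcnConstruction}.
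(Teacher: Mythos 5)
This statement is not proved in the paper at all: it is imported verbatim from \cite{almagor2020parametrized}, and the paper's reduction for \existsdet uses it as a black box. So there is no in-paper proof to compare against, and what you have written is a reconstruction of the cited construction. As such a reconstruction it has the right general flavour (encode runs, nondeterministically guess and certify a violation, let the initial counter act as a budget), but it contains at least two genuine gaps that are exactly where the cited work does its nontrivial work.

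First, the budget mechanism you propose does not close the halting direction. The first bullet requires $\lang(\cB,c)\supseteq\Sigma^*$ for some fixed $c$, i.e.\ \emph{every} word over $\Sigma$ must be accepted, including well-formed encodings whose unary counter blocks are astronomically larger than the bound $B$ of the halting computation. With the reversed load-and-subtract scheme, one inequality direction (``the later value is too large'') is checkable with no budget, but the other direction (``the later value is too small'') forces the OCN to go below zero by an amount equal to a counter value occurring \emph{in the word}, which is unbounded over $\Sigma^*$; no fixed $c$ covers all such words. Your proposed fix --- treating ``the value exceeds $c$'' as its own detectable violation --- is not implementable by an OCN: from initial counter $c$ it can verify that a block has length \emph{at most} $c$ (by surviving decrements), but accepting precisely the blocks of length \emph{greater than} $c$ amounts to a zero test, which OCNs lack. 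So the case split ``either a violation is detected or the word is the faithful run'' does not go through as stated.

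Second, the refined condition in the second bullet is a universally quantified statement over \emph{all} runs of the nondeterministic $\cB$ on $w_c$, including runs that enter violation-checker gadgets and stall or halt in their interior states. Saying you would ``design the non-accepting part so that every run ends in a state dead for $\#^*$'' restates the requirement rather than establishing it; in particular you must rule out runs that park in an intermediate gadget state from which some $\#$-labelled path still reaches an accepting sink, and you must do so simultaneously for every initial counter $c$. There is also a smaller internal inconsistency: you route the faithful halting computation to acceptance ``via the $\#$-gadget'', but the first bullet demands acceptance of its encoding as a word over $\Sigma$ alone, with no $\#$ present. In short, the outline is a reasonable guess at the shape of the construction in \cite{almagor2020parametrized}, but the two central difficulties --- making one budgeted counter refute unbounded two-counter cheats over \emph{all} words, and enforcing $\#^*$-deadness uniformly over all nondeterministic runs --- are named but not solved.
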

We can now proceed with the reduction to \existsdet.
\begin{theorem}
\label{thm:existsdetUndecidable}
\existsdet is undecidable.
\end{theorem}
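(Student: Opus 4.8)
<br />

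The plan is to reduce from the halting problem for 2CMs, using \cref{thm:parameterizedUniversality} as a black box. Given a 2CM $\cM$, I first construct the OCN $\cB$ over $\Sigma\cup\{\#\}$ guaranteed by that theorem. The goal is to massage $\cB$ into an OCN $\cA$ such that $\cA$ is \existsdet if and only if $\cM$ halts. The key idea is that in the halting case, there exists $c$ with $\lang(\cB,c)=\Sigma^*$, which is trivially the language of a DOCN (a single accepting state with self-loops). So the ``yes'' direction of the reduction should be almost immediate: if $\cM$ halts, the witnessing counter $c$ together with a trivial universal DOCN shows $\cA$ is \existsdet.

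The hard part is the ``no'' direction: I must show that if $\cM$ does \emph{not} halt, then for \emph{every} $c\in\bbN$, the language $\lang(\cA,c)$ admits no equivalent DOCN. The second bullet of \cref{thm:parameterizedUniversality} gives, for each $c$, a word $w_c$ on which every run of $\cB$ ends in a ``dead'' state with respect to the padding alphabet $\#^*$. The challenge is that non-universality of $\lang(\cB,c)$ by itself does not preclude determinizability --- plenty of non-universal languages are deterministic. So I expect I will need to augment $\cB$ with a gadget (likely exploiting the $\#$ letter and a pumping/counting phenomenon analogous to the one in \cref{lem:definitionsNotCoincide} and \cref{lem:detThenFinReach}) that forces any purported DOCN to maintain an unbounded counter distinction across the witnesses $w_c\#^m$. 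Concretely, I would use the fact that the family of witnesses $\{w_c\}_c$ is infinite and the dead-end behaviour after reading $w_c$ is tied to arbitrarily large counter thresholds, and then run a state-pumping argument on a hypothetical DOCN $\cD$ with $d$ states: a deterministic net cannot simultaneously reject all words $w_c\#^m$ (for the relevant $c,m$) while its counter, which can only be decremented finitely often before going negative, is forced to eventually loop on the $\#$-tail and accept, yielding the contradiction.

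Concretely, I would carry out the steps in the following order. First, invoke \cref{thm:parameterizedUniversality} to obtain $\cB$, and define $\cA$ (possibly $\cA=\cB$, or $\cB$ with a small $\#$-gadget) and verify $\lang(\cA,c)$ is well-behaved for each $c$. Second, handle the halting case by exhibiting the universal DOCN and the witness counter $c$, concluding \existsdet. Third, assume $\cM$ does not halt and fix an arbitrary candidate initial counter $c$ together with a hypothetical DOCN $\cD$ (with $d$ states) satisfying $\lang(\cD,0)=\lang(\cA,c)$; derive a ``positivity''-style property for $\cD$ as in \cref{lem:detThenFinReach}, namely that on any prefix that can be completed to a rejected word the run of $\cD$ survives. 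Fourth, feed $\cD$ the witness words $w_c\#^m$ for growing $m$, use the second bullet to argue these are all rejected by $\cA$ hence by $\cD$, and pump a cycle in $\cD$'s run to obtain a word that $\cD$ rejects but $\cA$ accepts (or vice versa), contradicting equivalence.

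The main obstacle I anticipate is bridging the gap between the purely language-theoretic guarantee of \cref{thm:parameterizedUniversality} (a statement about acceptance and dead states) and the structural/counter-based pumping argument needed to rule out \emph{every} DOCN for \emph{every} initial counter $c$ simultaneously. In particular, I must ensure the witness family $\{w_c\}$ interacts with the $\#$-padding in a way that blocks determinization robustly across all $c$, rather than merely showing non-universality; getting the gadget and the pumping thresholds to line up so that the rejection of $w_c\#^m$ forces a negative-cycle obligation on $\cD$ is where the real work lies.
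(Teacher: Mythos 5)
Your high-level skeleton matches the paper's: reduce from 2CM halting via \cref{thm:parameterizedUniversality}, get the ``yes'' direction from universality, and handle the ``no'' direction with a gadget plus pumping. But the core of the argument --- what the gadget is and why it works --- is left as a placeholder, and the sketch you do give points in the wrong direction. The paper's construction takes the \emph{disjoint union} of $\cB$ with a separate, inherently non-determinizable gadget $\cG$ over fresh letters $\$,\%$ (a two-branch counting net in the style of \cref{fig:gadget}), and extends $\cB$ so that $\$$ and $\%$ behave exactly like $\#$. This last step is what you are missing: it guarantees that in the halting case $\lang(\cB,c)=\Sigma'^*$ \emph{swallows} the gadget's language, so the union is universal and trivially determinizable. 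If you add a gadget that creates non-determinizability in the non-halting case, you must also explain why it does not poison the halting case; your step ``exhibit the universal DOCN'' silently assumes the gadget's language is contained in the universal one, and arranging that is precisely the trick.

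The second gap is in the contradiction itself. You propose to pump on \emph{rejected} words $w_c\#^m$ and derive a ``negative-cycle obligation'' --- an argument shaped like the one in \cref{lem:detThenFinReach}. That cannot work here, because rejection of the padded witnesses says nothing against determinizability (as you yourself note, non-universal languages can be deterministic). In the paper, the witnesses $w_c$ are \emph{accepted} by $\cA$ (through $\cG$, since they avoid $\$,\%$), and the second bullet of \cref{thm:parameterizedUniversality} guarantees that after $w_c$ the $\cB$-component contributes nothing over $\{\#,\%,\$\}^*$; hence the residual language of $\lang(\cA,c)$ after $w_c$ on that sub-alphabet is exactly the gadget's residual, which is non-determinizable. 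The contradiction then comes from accepted words of the form $w_c\#^{k+1-j}\%^{k+1}\$^{k+1}$ forcing a hypothetical $k$-state DOCN to track two independent counts, not from surviving a $\#$-tail on rejected words. Without identifying this mechanism, the reduction as you describe it does not go through.
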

\begin{proof}
We reduce the halting problem for 2CM to \existsdet. Given a 2CM $\cM$, we start by constructing the OCN $\cB$ as per \cref{thm:parameterizedUniversality}. We augment $\cB$ to work over the alphabet $\Sigma'=\Sigma\cup\{\#,\$,\%\}$, where $\$,\%\notin \Sigma$, by fixing the behaviour of $\$$ and $\%$ to be identical to $\#$.

Next, consider the gadget OCN $\cG$ depicted in \cref{fig:exists_determinization_sub_component}. A similar argument to the proof of \cref{lem:definitionsNotCoincide} (specifically, \cref{fig:gadget}), shows that $\cG$ does not have an equivalent DOCN for any initial counter $c$.



\begin{figure}[ht] 
\caption{Gadget OCN $\cG$ for Theorem~\ref{thm:existsdetUndecidable}.}
\centering
\includegraphics[page=2]{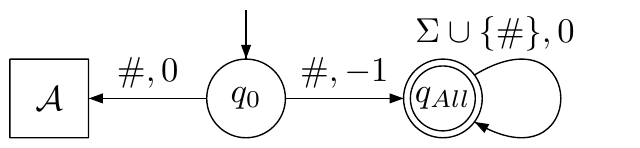}
\label{fig:exists_determinization_sub_component}
\end{figure}

We now obtain a new OCN $\cA$ by taking the union of $\cB$ and $\cG$ (i.e. placing them ``side by side'').
We claim that $\cM$ halts iff $\cA$ is $\existsdet$. 

If $\cM$ halts, by \cref{thm:parameterizedUniversality} there exists an initial counter $c$ such that $\lang(\cB,c)=\left\{\Sigma \cup \{\#\}\right\}^*$. Since in $\cB$ the letters $\$$ and $\%$ behave like $\#$,  we have that $\lang(\cA,c)=\Sigma'^*$, so $\cA$ is $\existsdet$. 

If $M$ does not halt, then again by \cref{thm:parameterizedUniversality}, for every $c\in\bbN$ There exists a word $w_c$ such that $w_c\notin \lang(\cB,c)$, and such every run of $\cB$ on $w_c$ end in a state from which reading $\#^*$ (and hence any word from $\{\#,\%\,\$\}^*$) does not lead to an accepting state.
Now assume by way of contradiction that $\cA$ has a deterministic equivalent $\cD$ with $k$ states for initial counter $c$. 
$\cA$ accepts $w_c$ with the runs of $\cG$, since $w_c$ does not contain $\$$ or $\%$. Thus, $\cD$ accepts $w_c$ with initial counter $0$.
In addition, $\cA$, and therefore $\cD$, both accept $w_c'=w_c\#^{k+1-j}\%^{k+1}\$^{k+1}$ where $j$ is the number of occurrences of \#'s in $w_c$. Using the fact that $\cG$ does not have an equivalent DOCN, we can now reach a contradiction with similar arguments as the proof of \cref{lem:definitionsNotCoincide} (\cref{fig:gadget}). 
\end{proof}

\subsection{A Lower Bound for \unidet}
Unfortunately, as of yet we are unable to resolve the decidability of \unidet. In this section, we show that \unidet is Ackermann-hard, and in particular NON-ELEMENTARY.

\begin{theorem}
\label{thm:unidetAckermannn}
\unidet is Ackermann-hard.
\end{theorem}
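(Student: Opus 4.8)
The plan is to reduce from a known Ackermann-hard problem for lossy counter machines, following the same spirit as the undecidability reduction for \zerodet but with a complexity-sensitive source problem. The natural candidate is the \emph{reachability} (or \emph{boundedness with explicit bound}) problem for LCMs, which is known to be Ackermann-complete~\cite{schnoebelen2010lossy}. Concretely, I would start from an LCM $\cM$ and location $\ell_0$ for which deciding a property (e.g., whether a designated halting configuration is reachable, or whether the reachability set is finite) is Ackermann-hard, and build an OCN $\cA$ whose \unidet status encodes the answer. The key difficulty over the \zerodet case is that \unidet must control behaviour \emph{simultaneously for all initial counters $c$}, so the reduction must be uniform in $c$.

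The main structural idea is to combine the reverse-encoding construction of \cref{lem:lcmToOcnConstruction} with the separating gadget $\cC$ from \cref{lem:definitionsNotCoincide} (\cref{fig:forallDetnotUniDet}), which is \foralldet but not \unidet. First I would take the OCN $\cA_{\cM,\ell_0}$ of \cref{lem:lcmToOcnConstruction}, whose complement language $\overline{\lang(\cA,0)}=L_{\cM,\ell_0}^R$ is regular exactly when the reachability set is finite. I would then glue this to a $\cC$-style gadget over a fresh sub-alphabet, placing them side by side as in the proof of \cref{thm:existsdetUndecidable}, so that the composite OCN is \unidet if and only if \emph{both} components are uniformly determinizable. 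The gadget forces the ``not \unidet'' witness (a word $\#a^{n}b^{n}$ requiring a negative cycle) to appear precisely when the LCM component fails to be regular, i.e., when the reachability set is infinite. Since the source problem is Ackermann-hard and the reduction is polynomial (the OCN $\cA$ is built in time polynomial in $\cM$, as in \cref{lem:lcmToOcnConstruction}), Ackermann-hardness of \unidet follows.

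The key steps, in order, are: (1) identify an Ackermann-complete LCM problem and verify it reduces cleanly through the reverse-encoding of \cref{lem:lcmToOcnConstruction}; (2) design the composite OCN combining $\cA_{\cM,\ell_0}$ with the $\cC$-gadget, ensuring the alphabets interact correctly so that \unidet of the union decomposes into \unidet of each part; (3) prove the forward direction, that a ``good'' LCM answer yields uniform determinizability across all $c$, using the monotonicity of OCNs (if $\cA$ accepts $w$ from $c$ it accepts from all $c'\ge c$) to lift a single DOCN to all initial counters; and (4) prove the backward direction via a pumping argument analogous to \cref{lem:detThenFinReach}, showing that an infinite reachability set forces a negative cycle in any candidate DOCN that breaks equivalence at some initial counter.

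The hard part will be step (3)--(4): ensuring the reduction is genuinely \emph{uniform} in the initial counter. Unlike \zerodet, where one only controls $\lang(\cA,0)$, here a candidate DOCN $\cD$ must satisfy $\lang(\cA,c)=\lang(\cD,c)$ for \emph{every} $c$ simultaneously. The subtlety is that increasing $c$ enlarges $\lang(\cA,c)$ monotonically, and the pumping argument must produce a single word whose acceptance status is wrong at some specific counter value, robustly across the whole family. I expect the crux to be arguing that the negative cycle forced by an infinite reachability set cannot be ``repaired'' by the extra counter budget available at larger $c$ — which is exactly the phenomenon exploited in the $\cC$-gadget, where a negative cycle needed at counter $0$ breaks equivalence already at counter $1$. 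Making this quantitative and uniform, rather than merely existential, is where the technical effort lies.
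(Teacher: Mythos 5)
There is a genuine gap, and it sits exactly where you locate the ``technical effort'': the uniformity of the reduction in the initial counter. Your source problem is wrong for the job in two ways. First, if you reduce from \emph{finiteness of the reachability set} (which is undecidable, not merely Ackermann-hard) and your equivalence ``composite is \unidet iff the reachability set is finite'' actually held, you would have proved \unidet undecidable --- but the paper explicitly leaves decidability of \unidet open, which is a strong signal that the LCM construction does not lift to the uniform setting. Second, and more concretely, the forward direction fails as stated: \cref{lem:finReachThendet} only shows that $\lang(\cA_{\cM,\ell_0},0)$ is regular when the reachability set is finite. It says nothing about $\lang(\cA_{\cM,\ell_0},c)$ for $c\ge 1$, and these languages (words admitting a counter violation detectable with initial budget $c$) need not be simultaneously capturable by a single DOCN across all $c$. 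So ``both components are uniformly determinizable'' is not established even in the good case, and the claimed decomposition of \unidet over a side-by-side union is itself unjustified.

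The paper takes a much shorter route: it reduces from \emph{OCN universality with initial counter $0$}, which is Ackermann-hard by~\cite{hofman2018trace} (and decidable, so the reduction yields hardness without overclaiming undecidability). Given $\cA$, one builds $\cB$ accepting $\#\cdot\Sigma'^*$ from every counter when $\cA$ is universal, and otherwise arranges a word $w\notin\lang(\cA,0)$ so that $\#w\notin\lang(\cB,0)$, $\#w\in\lang(\cB,1)$, and $\#w\$\in\lang(\cB,0)$. Any uniform DOCN $\cD$ is then trapped: its run on $\#w$ from counter $0$ cannot end in a non-accepting state (the same run from counter $1$ would also reject, contradicting $\#w\in\lang(\cD,1)$), so it must die by going negative, contradicting $\#w\$\in\lang(\cD,0)$. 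This three-word conflict between counters $0$ and $1$ is the quantitative, uniform argument your sketch is missing; if you want to salvage your approach you would need an analogous finite witness set whose acceptance pattern across two adjacent counter values is unrealizable by any single deterministic machine, and the LCM encoding does not obviously provide one.
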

\begin{proof}
We show a reduction from the OCN universality problem with initial counter 0, shown to be Ackermann-hard in~\cite{hofman2018trace}.

Consider an OCN $\cA=\tup{Q,\Sigma,s_0,\delta,F}$. We construct an OCN $\cB=\tup{Q',\Sigma',q_0,\delta',F'}$ as depicted in \cref{fig:uniformGeneralHelper} (for $\#,\$ \notin \Sigma$). 

\begin{figure}[ht]
\centering
\includegraphics[page=3]{graphics/otherFigs-pics.pdf}
\caption{The OCN $\cB$ in the proof of \cref{thm:unidetAckermannn}.}
\label{fig:uniformGeneralHelper}
\end{figure}
We claim that $\lang(\cA,0)=\Sigma^*$ iff $\cB$ is $\unidet$.

Assume $\lang(\cA,0)=\Sigma^*$, then $\lang(\cB,c)=\{\#w:w \in \Sigma'^*\}$ for every counter value $c$. Indeed, every word starting with \# can be accepted by $\cB$ with initial counter value 0 either through $\cA$, if it does not contain $\$$, or in $q_{\text{All}}$ if it does. However, every word not starting with \# cannot be accepted by $\cB$ for any initial counter value. Thus, $\cB$ is \unidet.

Conversely, if $\lang(\cA,0) \neq \Sigma^*$, let $w \notin \lang(\cA,0)$. Assume by way of contradiction that there exists a deterministic OCN $\cD$ that is uniform-equivalent to $\cB$.

$\#w \notin \lang(\cB,0)$, so $\#w \notin \lang(\cD,0)$. Moreover, the run of $\cD$ on $\#w$ cannot end in a non-accepting state, since $\#w\in \lang(\cB,1)=\lang(\cD,1)$. Thus, the run of $\cD$ on $\#w$ terminates due to the counter becoming negative. However, this is a contradiction, since  $\#w\$\in \lang(\cB,0)=\lang(\cD,0)$.
We conclude that $\cB$ is not \unidet.
\end{proof}

\section{Singleton Alphabet}
\label{sec:singleton}
We now turn to study OCNs over a singleton alphabet denoted $\Sigma=\{\sigma\}$ throughout.

We start by briefly introducing Presburger Arithmetic (PA)~\cite{Haase2018,presburger1929uber}. We refer the reader to~\cite{Haase2018} for a detailed survey. 
PA is the first-order theory of integers with addition and order $\mathrm{FO}(\bbZ,0,1,+,<)$, and it is a decidable logic. 

There is an important connection between PA and semilinear sets: 
for a \emph{basis vector} $\vec{b}\in \bbZ^d$ and a set of \emph{periods} $P=\{\vec{p_1}\ldots \vec{p_k}\}\subseteq \bbZ^d$, we define the \emph{linear set} $\lin(\vec{b},P)=\{\vec{b}+\lambda_1\vec{p_1}+\ldots +\lambda_k \vec{p_k}:\lambda_i\in \bbN \text{ for all }1\le i\le k\}$.
Then, a \emph{semilinear} set is a finite union of linear sets.

A fundamental theorem about PA~\cite{ginsburg1964bounded} shows that that for every PA formula $\phi(\vec{x})$ with free variables $\vec{x}$, the set $\sem{\phi}=\{\vec{a}:\vec{a}\models \phi(\vec{x})\}$ is semilinear, and the converse also holds -- every semilinear set is PA-definable.  

\shtodo{Check the structure of our PA formulas, to see what we can say about complexity.}


Consider an OCN $\cA$ over $\Sigma=\{\sigma\}$. For every word $\sigma^n$, either $\sigma^n$ is not accepted by $\cA$ for any counter value, or there exists a minimal counter value $c$ such that $\sigma^n\in \lang(\cA,c')$ iff  $c'\ge c$. We can therefore fully characterize the language of $\cA$ on any counter value using the \emph{Minimal Counter Relation\footnote{We remark that $\MCR(\cA)$ is in fact the graph of a partial function. For convenience of working with PA, we stick with the relation notation.}} (\MCR), defined as
\[\MCR(\cA)=\left\{(n,c)\subseteq \bbN^2, c \text{ is the minimal integer such that } \sigma^n \in \lang(\cA,c) \right\}.\]


We start by showing that $\MCR(\cA)$ is semilinear.
\begin{lemma}
\label{lem:MCRisSemilinear}
Consider an OCN $\cA$ over $\Sigma=\{\sigma\}$, then $\MCR(\cA)$ is effectively semilinear.
\end{lemma}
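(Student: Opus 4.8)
The plan is to reduce the statement to a reachability question expressible in Presburger arithmetic (PA), and then to invoke the PA--semilinear correspondence via the theorem of~\cite{ginsburg1964bounded} quoted above. Since the alphabet is a singleton, a run on $\sigma^n$ is just a walk of length $n$ in the finite weighted graph underlying $\cA$, and by the characterization in the preliminaries, $\sigma^n\in\lang(\cA,c)$ holds iff there is a walk $\pi$ of length $n$ from $s_0$ to an accepting state with $c\ge -\nadir{\pi}$. Because OCNs are monotonic, the \emph{acceptance relation} $R=\{(n,c):\sigma^n\in\lang(\cA,c)\}$ is upward closed in $c$, and $\MCR(\cA)$ is exactly its lower boundary. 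Hence it suffices to show that $R$ is PA-definable: then
\[\MCR(\cA)=\{(n,c): R(n,c)\wedge(c=0\vee\neg R(n,c-1))\},\]
which is again PA-definable, and therefore effectively semilinear.

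To express $R$, I would isolate the three quantities attached to a walk $\pi$ from a state $p$ to a state $q$: its length, its effect $\effect{\pi}$, and its nadir $\nadir{\pi}$. Concretely, set $T_{p,q}=\{(n,e,h)\in\bbN\times\bbZ\times\bbZ: \text{some length-}n\text{ walk }p\to q\text{ has }\effect{\pi}=e,\ \nadir{\pi}=h\}$. Given that each $T_{p,q}$ is (effectively) semilinear, $R$ is PA-definable, since $\sigma^n\in\lang(\cA,c)$ iff $\exists f\in F,\ \exists e,h:\ (n,e,h)\in T_{s_0,f}\wedge c+h\ge0$, and the minimal such $c$ is recovered as $-\max\{h:\ldots\}$ by a PA formula.

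The heart of the argument, and the main obstacle, is showing $T_{p,q}$ semilinear, because the nadir is \emph{not} a function of the Parikh image (edge multiset) of the walk, so Parikh's theorem does not apply directly. I would handle this by splitting each walk at the position where its nadir is attained: $\pi=\alpha\beta$, where $\alpha$ is a walk $p\to r$ with $\effect{\alpha}=h$ and $\nadir{\alpha}=h$ (the counter never dips below its endpoint before the split), and $\beta$ is a walk $r\to q$ with $\nadir{\beta}\ge0$ (never dips below its start). Reversing $\alpha$ and negating all weights turns the first constraint into one of the second kind, so both factors reduce to a single primitive question: the set of pairs $(\text{length},\text{effect})$ of walks between two fixed states that stay non-negative starting from counter $0$, i.e.\ the step-counted reachability relation of an OCN. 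Semilinearity of $T_{p,q}$ then follows because semilinear sets are closed under the finite unions, coordinate sums, and projections used to recombine $\alpha$ and $\beta$.

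Finally, I would establish semilinearity of this primitive reachability relation using the flatness of one-dimensional VASS: reachability in a $1$-VASS is captured by finitely many linear path schemes $u_0v_1^*u_1\cdots v_k^*u_k$. For a fixed scheme, the length and effect are affine functions of the iteration counts $\lambda_1,\dots,\lambda_k$, and the requirement that the counter stay non-negative throughout --- including inside the iterated simple cycles --- translates, via the standard analysis of how the nadir of $v_i^{\lambda_i}$ evolves with $\lambda_i$, into finitely many linear inequalities over the $\lambda_i$. Existentially projecting away the $\lambda_i$ yields a semilinear set, and taking the union over the finitely many schemes gives the claim. The delicate point, where I expect to spend the most care, is precisely this encoding of the non-negativity constraint under cycle iteration as linear constraints; everything else is routine closure of semilinear/PA-definable sets, which also makes the whole construction effective.
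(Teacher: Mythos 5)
Your proof is correct, but it takes a genuinely different route from the paper's. The paper sidesteps all path-level analysis by augmenting $\cA$ with a second counter that is incremented on every transition, thereby turning the length of the run into a counter value; it then invokes, as a black box, the theorem of~\cite{leroux2004flatness} that the reachability relation of a 2D-VASS is effectively PA-definable, and extracts $\MCR(\cA)$ from the resulting formula by the same minimality trick you use. You instead stay in dimension one: you characterize acceptance via the triple $(\text{length},\effect{\pi},\nadir{\pi})$, split each walk at the position where its nadir is attained (correctly observing that reversal plus weight negation turns the ``never dips below its endpoint'' factor into a ``never dips below its start'' factor), and reduce everything to step-counted non-negative reachability in a 1-VASS, which you then handle via linear path schemes and an explicit encoding of the non-negativity constraint under cycle iteration as linear inequalities in the iteration counts. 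Your decomposition at the nadir is sound (one direction gives $\nadir{\alpha}=\effect{\alpha}$ and $\nadir{\beta}=0$, the other gives $\nadir{\alpha\beta}=\min(\nadir{\alpha},\effect{\alpha}+\nadir{\beta})$), and the cycle-iteration constraints do linearize after a case split on the sign of the cycle effect and on whether the iteration count is zero. What each approach buys: the paper's is shorter and delegates all difficulty to a deep 2D-VASS result, at the cost of being non-constructive about complexity (the paper explicitly laments this in its discussion); yours is more self-contained, uses only the much more elementary flattability of 1-VASS, and would plausibly yield explicit bounds on the semilinear representation of $\MCR(\cA)$ --- but it requires carefully proving the linear-path-scheme decomposition and the non-negativity encoding, which is exactly the ``delicate point'' you flag.
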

\begin{proof}
We prove the claim using well-known and deep results about low-dimensional VASS.
A 2D-VASS is (for our purposes\footnote{Usually, OCNs are defined as 1D-VASS, not the other way around.}) identical to an OCN over $\Sigma=\{\sigma\}$, but has two counters (both need to be kept non-negative). 
Formally, a 2D VASS is $\cV=\tup{Q,s_0,\delta,F}$, where $\delta\subseteq Q\times\bbZ^2\times Q$. The semantics are similar to OCNs, acting separately on the two counters, as follows. 
A \emph{configuration} of $\cV$ is $(q,(c_1,c_2))$ where $q\in Q$ and $(c_1,c_2)\in \bbN^2$ are the counter values, and a \emph{run} is a sequence of configurations $(q_1,(c^1_1,c^1_2)),\ldots,(q_k,(c^k_1,c^k_2))$ that follow according to $\delta$, i.e., for every $1\le i <k$ we have that $(q_i,(c^{i+1}_1-c^i_1,c^{i+1}_2-c^i_2),q_{i+1})\in \delta$. We denote $(q_1,(c^1_1,c^1_2))\step{\cV}(q_k,(c^k_1,c^k_2))$ if such a run exists.

In~\cite{leroux2004flatness}, it is proved that given a 2D-VASS, we can effectively compute a PA formula $\reach(q,x_1,x_2,q',y_1,y_2)$ such that $\sem{\reach(q,x_1,x_2,q',y_1,y_2)}=\{(q,c_1,c_2,q',d_1,d_2): (q,c_1,c_2)\step{\cV}(q',d_1,d_2)\}$ (the states $q,q'$ are encoded as variables taking values in $\{1,\ldots, |Q|\}$).

Observe that $\reach$ does not encode information about the length of the run, whereas $\MCR$ does require it. On the other hand, $\reach$ works for 2D-VASS, whereas we only need an OCN (i.e., 1D-VASS). We therefore proceed as follows.
Given the OCN $\cA=\tup{\{\sigma\},Q,s_0,\delta,F}$, we construct a 2D-VASS $\cV=\tup{Q,s_0,\delta',F}$ where $\delta'=\{(q,(v_1,1),q'): (q,\sigma,v_1,q')\in \delta\}$. That is, $\cV$ works exactly as $\cA$, with the second counter marking the length of the run. In particular, fixing $\cA$ to start from $q_0$ with initial counter $c$, the formula $\reach(q_0,c,0,q',d,n)$ is satisfied by $(q',d,n)$ such that $\cA$ can reach state $q'$ upon reading $\sigma^{n}$ from initial counter $c$, ending with counter $d$.

Thus, we can now define $\MCR$ as follows. Consider the formula
$\phi(x,y)= \bigvee_{q'\in F} \exists v\in \bbN, \reach(q_0,0,y,q',x,v)$, then $\sem{\phi(x,y)}=\MCR$. Indeed, by the above, $\phi(x,y)$ is satisfied by $(n,c)$ iff there exists an accepting state $q'\in F$ and a counter value $v$ such that $\cA$ can reach $q'$ upon reading $\sigma^n$ starting from counter $c$. 
We can then require $c$ to be minimal by defining
$\theta(x,y)=\phi(x,y)\wedge \forall y',\ y'<y\to \neg \phi(x,y)$. We then have that $\sem{\theta(x,y)}=\MCR$.
Thus, $\MCR$ is definable in PA, and by~\cite{ginsburg1964bounded} it is effectively semilinear.
\end{proof}


\subsection{Decidability of $\unidet$ over Singleton Alphabet}
In this subsection we prove that $\unidet$ is decidable for OCN over a singleton alphabet, and we can construct an equivalent DOCN, if one exists. 
Our characterization of \unidet is based on its $\MCR$, and specifically on two notions for subsets of $\bbN^2$ (applied to $\MCR$). Consider a set $S\subseteq \bbN^2$. We say that $S$ is \emph{increasing} if it is the graph of an increasing partial function. That is, for every $(n_1,c_1),(n_2,c_2)\in S$, if $n_1\le n_2$ then $c_1\le c_2$, and if $n_1=n_2$ then $c_1=c_2$.
Next, we say that $S$ is \emph{$(N,k,d)$-Ultimately Periodic} for $N,k,d\in \bbN$ if for every $n \geq N, (n,x) \in S $ iff $(n+k,x+d) \in S$. We say that $S$ is (effectively) ultimately periodic if it is $(N,k,d)$-ultimately periodic for some (effectively computable) parameters $N,k,d\in \bbN$.

The main technical result of this section is the following.

\begin{theorem}
\label{thm:unidetEquivalentConditions}
Consider an OCN $\cA$ over $\Sigma=\{\sigma\}$, then the following are equivalent:
\begin{enumerate}
    \item $\MCR(\cA)$ is increasing.
    \item $\MCR(\cA)$ is increasing and effectively ultimately periodic.
    \item $\cA$ is \unidet, and we can effectively compute an equivalent DOCN.
\end{enumerate}
\end{theorem}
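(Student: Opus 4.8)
The plan is to prove the cycle of implications $(3)\Rightarrow(1)\Rightarrow(2)\Rightarrow(3)$; since $(2)\Rightarrow(1)$ is immediate (condition $(2)$ explicitly contains increasingness), this establishes the full equivalence. The two conceptually simple steps are $(3)\Rightarrow(1)$ and the observation that all the structure needed for $(1)\Rightarrow(2)$ is supplied by \cref{lem:MCRisSemilinear}. For $(3)\Rightarrow(1)$, suppose $\cA$ is \unidet with equivalent DOCN $\cD$, so $\lang(\cA,c)=\lang(\cD,c)$ for every $c$ and hence $\MCR(\cA)=\MCR(\cD)$. Over a singleton alphabet $\cD$ is deterministic, so its run on $\sigma^n$ follows a unique path $\pi_n$, and $\sigma^n\in\lang(\cD,c)$ iff $\pi_n$ ends in an accepting state and $c\ge -\nadir{\pi_n}$; thus the minimal counter is exactly $-\nadir{\pi_n}$. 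Since $\pi_{n_1}$ is a prefix of $\pi_{n_2}$ whenever $n_1\le n_2$, the nadir is non-increasing in $n$, so $-\nadir{\pi_n}$ is non-decreasing. Together with determinism (single-valuedness), this is precisely the statement that $\MCR(\cD)=\MCR(\cA)$ is increasing.

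The implication $(1)\Rightarrow(2)$ is the one I expect to be the main obstacle, and it is where the linear algebra and number theory enter. Using \cref{lem:MCRisSemilinear}, write $\MCR(\cA)=\bigcup_i \lin(\vec{b_i},P_i)$, where we may take all periods in $\bbN^2$ since $\MCR(\cA)\subseteq\bbN^2$. The key structural fact is that, because $\MCR(\cA)$ is the graph of a single-valued function, within each linear set all nonzero periods must be colinear: two periods of distinct slopes would yield two points sharing a first coordinate but differing in the second. Hence every infinite linear set lies on a line of some rational slope $s_i$, and its first coordinates form, for large $n$, a fixed residue class modulo the gcd of the relevant first-components. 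The decisive step is then to show that all infinite linear sets share a single slope $s$: if slopes $s_1<s_2$ both occurred for arbitrarily large $n$, then using that each ray meets every sufficiently long window of first coordinates one finds $n_2\le n_1$ with $c(n_2)>c(n_1)$, contradicting increasingness (and if $n_1=n_2$, contradicting single-valuedness). With one common slope $s$, taking $k$ to be the lcm of the moduli and of the period of the (semilinear, hence ultimately periodic) domain, and $d=sk\in\bbN$, yields $(n,x)\in\MCR(\cA)\iff(n+k,x+d)\in\MCR(\cA)$ for all large $n$. Every parameter is computable from the semilinear representation, so the periodicity is effective.

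Finally, for $(2)\Rightarrow(3)$ I would realize the increasing, $(N,k,d)$-ultimately periodic function $c(\cdot)$ as a lasso-shaped DOCN all of whose transition effects are non-positive. When every transition effect is $\le 0$, every prefix effect is non-increasing, so $\nadir{\pi_n}=\effect{\pi_n}$. Setting $E(0)=0$ and $E(n)=-\max\{c(m): m\in\mathrm{dom},\ m\le n\}$, and choosing the $n$-th transition effect to be $E(n)-E(n-1)\le 0$, gives $\effect{\pi_n}=E(n)$ and hence $-\nadir{\pi_n}=-E(n)=c(n)$ at every domain point $n$; states are declared accepting exactly on the (ultimately periodic) domain. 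Ultimate periodicity guarantees that beyond $N$ the per-period increment $E(n)-E(n+k)=d$ and the accepting pattern repeat with period $k$, so the automaton is a finite tail followed by a single length-$k$ loop, i.e. a genuine DOCN; by construction $\lang(\cD,c)=\lang(\cA,c)$ for all $c$, and the construction is effective. Combined with the trivial $(2)\Rightarrow(1)$, this closes the cycle and proves the theorem.
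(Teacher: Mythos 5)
Your proposal is correct and follows essentially the same route as the paper: $(3)\Rightarrow(1)$ from determinism (the paper argues by contradiction on a run ending non-accepting, you argue via monotonicity of $-\nadir{\pi_n}$ along prefixes — same substance), $(1)\Rightarrow(2)$ by combining \cref{lem:MCRisSemilinear} with the observation that increasingness forces all infinite linear components onto parallel arithmetic progressions with a common period (the paper normalizes first components via an lcm and then equates the second components; your slope/residue-class phrasing is an equivalent packaging), and $(2)\Rightarrow(3)$ via the identical lasso-shaped DOCN with non-positive effects realizing the completed minimal-counter function. No gaps worth flagging beyond sketch-level details (e.g., explicitly ruling out vertical periods, which single-valuedness already does).
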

We prove \cref{thm:unidetEquivalentConditions} in the remainder of this section. We start with a technical lemma concerning the implication $1\implies 2$.
\begin{lemma}
\label{lem:semilinearAndIncreadingThenPeriodic}
Consider an effectively semilinear set $S\subseteq \bbN^2$. If $S$ is increasing, then $S$ is effectively periodic.
\end{lemma}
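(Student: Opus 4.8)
The plan is to work directly with a computable semilinear representation $S=\bigcup_{i=1}^{M}\lin(\vec{b_i},P_i)$ (available since $S$ is effectively semilinear) and to extract strong structural constraints from the hypothesis that $S$ is the graph of an increasing partial function. First I would normalize the representation. Since each $\lin(\vec{b_i},P_i)\subseteq S\subseteq\bbN^2$, no period can have a negative coordinate (iterating it would eventually drive a coordinate below $0$), so every period may be taken in $\bbN^2$, and the zero period is discarded. Single-valuedness of $S$ then forbids any period $\vec{p}=(p^{(1)},p^{(2)})$ with $p^{(1)}=0$: such a period would produce two points $\vec{b_i}$ and $\vec{b_i}+\vec{p}$ sharing a first coordinate but differing in the second. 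Hence every surviving period has $p^{(1)}>0$, and each linear set carrying a period is infinite with unbounded first coordinate.

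The technical core, and the step I expect to be the main obstacle, is proving that a single common rational slope governs all the infinite linear sets. Within one linear set with base $\vec{b}$, for any two periods $\vec{p_i},\vec{p_j}$ the points $\vec{b}+p_j^{(1)}\vec{p_i}$ and $\vec{b}+p_i^{(1)}\vec{p_j}$ both lie in the set and share the first coordinate $b^{(1)}+p_i^{(1)}p_j^{(1)}$; single-valuedness forces their second coordinates to agree, yielding $p_i^{(2)}/p_i^{(1)}=p_j^{(2)}/p_j^{(1)}$, so each linear set lies on a line of fixed rational slope. To show all infinite linear sets share one slope $s$, I would invoke monotonicity: a linear set of slope $s_i$ contributes points whose second coordinate grows like $s_i$ times the first. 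If two such sets had slopes $s_1<s_2$ and both contributed points at arbitrarily large, interleaved first coordinates, then comparing a point of the steeper set at first coordinate $n-1$ with a point of the flatter set at $n$ gives $s_2(n-1)+O(1)\le s_1 n+O(1)$, i.e.\ $(s_2-s_1)n\le O(1)$, contradicting monotonicity for large $n$. Thus beyond some threshold $N_0$ all of $S$ comes from linear sets of one slope $s=q/p$ with $\gcd(p,q)=1$ and $p>0$ (the finitely many period-free linear sets are isolated points, confined to small first coordinates).

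With a single slope in hand, the remainder is a numerical-semigroup computation. Writing $s=q/p$ in lowest terms, each period of linear set $i$ is an integer multiple $m_{i,j}(p,q)$, so the offsets $t$ with $\vec{b_i}+t(p,q)\in\lin(\vec{b_i},P_i)$ form exactly the numerical semigroup generated by $\{m_{i,j}\}_j$. Such a semigroup contains every sufficiently large multiple of $g_i=\gcd_j m_{i,j}$ (beyond its Frobenius number), so ultimately linear set $i$ contributes the points $\vec{b_i}+t(p,q)$ for all large $t\in g_i\bbN$; in first-coordinate terms this is an arithmetic progression of common difference $g_i p$, with the second coordinate slaved to the first through $s$. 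Taking $k=p\cdot\mathrm{lcm}_i(g_i)$ and $d=qk/p$, a shift of the first coordinate by $k$ maps each ultimate progression onto itself while shifting the second coordinate by exactly $d$; choosing $N$ past $N_0$ and past all the (computable) Frobenius thresholds yields $(n,x)\in S\iff(n+k,x+d)\in S$ for every $n\ge N$. Since the semilinear representation, the slope $s$, the $g_i$, and the Frobenius bounds are all effectively computable, $S$ is effectively ultimately periodic.
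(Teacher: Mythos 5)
Your proof is correct, and it takes a genuinely different route from the paper's. The paper begins by invoking the Ginsburg--Spanier normal form to assume each period set $P_i$ is linearly independent; the very computation you use to prove parallelism (comparing $\vec{b}+p_j^{(1)}\vec{p_i}$ with $\vec{b}+p_i^{(1)}\vec{p_j}$ and applying single-valuedness) then forces $|P_i|\le 1$ outright, since two parallel, linearly independent periods cannot coexist. The paper then rewrites each $\lin((a_i,b_i),\{(p_i,r_i)\})$ with the common first component $\gamma=\mathrm{lcm}_i(p_i)$ (compensating with finitely many shifted bases), uses the increasing property to show all the resulting periods are literally the same vector $(\gamma,\eta)$, and reads off $(\alpha_{\max},\gamma,\eta)$-ultimate periodicity. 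You instead work with an arbitrary semilinear representation: you keep possibly several periods per linear set, conclude only that they are parallel, establish a single global slope by the interleaving/growth-rate argument (this step plays the same role as the paper's ``all periods are the same'' step, and uses the increasing hypothesis in a similar way), and then regularize each set's offsets via numerical-semigroup/Frobenius bounds, taking $k=p\cdot\mathrm{lcm}_i(g_i)$. The trade-off: the paper's route is shorter because the linear-independence normal form makes your Frobenius step unnecessary (multiple same-slope generators never arise), while your route is self-contained in that it needs no normal form at all, at the price of importing the standard effective bound beyond which a numerical semigroup is a full arithmetic progression of its gcd. Both yield computable parameters $(N,k,d)$. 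Two edge cases you leave implicit are harmless but worth a sentence: if no periods survive then $S$ is finite and trivially ultimately periodic, and the empty $\mathrm{lcm}$ in that case should be handled by convention.
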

\begin{proof}
Since $S$ is effectively semilinear, then by~\cite{ginsburg1964bounded} we can write $S=\bigcup_{i=1}^M \lin(\vec{b_i},P_i)$ where $\vec{b_i}\in \bbN^2$ and $P_i\subseteq \bbN^2$ for every $1\le i\le M$.
Moreover, by~\cite{ginsburg1964bounded,Haase2018}, we can assume that each $P_i$ is a linearly-independent set of vectors.
\vspace{-1.25em}
\subparagraph*{All periods are singletons: }
We start by claiming that since $S$ is increasing, then $|P_i|\le 1$ for every $1\le i\le M$. 
Indeed, assume $(n_1,c_1),(n_2,c_2)\in P_i$, and denote $\vec{b_i}=(a,b)$, then by the definition of a linear set, for every $\lambda_1,\lambda_2\in \bbN$ we have that $(a,b)+\lambda_1 (n_1,c_1)+\lambda_2 (n_2,c_2)\in S$. Setting $\lambda_1=0$ and $\lambda_2=n_1$, we have that $(a+n_1 n_2, b+ n_1 c_2)\in S$, and setting $\lambda_1=n_2$ and $\lambda_2=0$, we have that $(a+ n_2 n_1, b+ n_2 c_1)\in S$. Observe that $a+n_1n_2=a+n_2n_1$, and since $S$ is increasing, this implies $b+n_1c_2=b+n_2c_1$, that is $n_1c_2=n_2c_1$. 
It follows that $n_2(n_1,c_1)=n_1(n_2,c_2)$, but $P_i$ is linearly independent, so it must hold that $(n_1,c_1)=(n_2,c_2)$, so $|P_i|\le 1$.

Thus, we can in fact write $S=\bigcup_{i=1}^M \lin(\vec{b_i},\{\vec{p_i}\})$ where $\vec{b_i},\vec{p_i}\in \bbN^2$ (note that if $P_i=\emptyset$ we now take $\vec{p_i}=(0,0)$). 
For every $1\le i\le M$, denote $\vec{b_i}=(a_i,b_i)$ and $\vec{p_i}=(p_i,r_i)$. 
\vspace{-1.25em}
\subparagraph*{All Periods have the same first component: }
We now claim that we can restrict all periods to have the same first component. That is, we can compute $\gamma\in \bbN$ and 
write $S=\bigcup_{j=1}^K\lin((\alpha_j,\beta_j),\{(\gamma,\eta_j)\})$.

Indeed, take $\gamma=\mathrm{lcm}(\{p_i\}_{i=1}^M)$, we now ``spread'' each linear component $\lin((a_i,b_i),\{p_i,r_i\})$ by changing the period to $(\gamma, \frac{\gamma}{p_i} r_i)$, and compensating by adding additional linear sets with the same period and offset basis, to capture the ``skipped'' elements. In \cref{app:semilinearAndIncreadingThenPeriodic} we describe the construction in general. We illustrate it here with an example: 
\begin{example*}
Let $S=\lin((1,0),(4,8)) \cup \lin((2,1),(6,12))$. We have $\gamma=\textrm{lcm}(4,6)=12$. 
We write $\lin((1,0),(4,8))=\lin((1,0),(12,24)) \cup \lin((5,8),(12,24)) \cup \lin((9,16),(12,24))$, the intuition being that instead of e.g., a $(4,8)$ period, we have a $(12,24)$ period, and we add different basis vectors to fill the gaps, so the new basis vectors are $(5,8)$ and $(9,16)$, where the next basis vector $(13,24)$ is already captured by $(1,0)+(12,24)$.

Similarly, we write $\lin((2,1),(6,12))=\lin((2,1),(12,24)) \cup \lin((8,13),(12,24))$. 
Overall we get $S=\bigcup_{\vec{b} \in B}\lin(\vec{b},(12,24))$ for $B=\left\{(1,0),(5,8),(9,16),(2,1),(8,13)\right\}$.

Note that $S$ in the example is increasing, and that we actually end up with the same period vector, not just on the first component. As we show next, this is not a coincidence.
\end{example*}

\vspace{-1.25em}
\subparagraph{All Periods are the same: }
Finally, we claim that we now have $\eta_i=\eta_{j}$ for every $1\le i,j\le K$, so that in fact all the periods are the same vector $(\gamma,\eta)$. Indeed, 
Assume by way of contradiction that $\eta_j < \eta_i$ for some $1 \le i,j \leq K$. Now, let $y \in \bbN$ be large enough so that $\alpha_i \leq \alpha_{j} + y \cdot \gamma$, and let $x \in \bbN$ be large enough so that (given $y$): $\beta_{i}+x \cdot \eta_i > \beta_{j} + y \cdot \eta_j + x \cdot \eta_j$. 

We now have that $(\alpha_i,\beta_i) + x \cdot (\gamma,\eta_i) \in S$ and $(\alpha_j,\beta_j) + (x+y) \cdot (\gamma,\eta_j) \in S$, which contradicts $S$ being increasing, since $\alpha_{i} \leq \alpha_{j} + y \cdot \gamma$ and therefore $\alpha_{i} + x \cdot \gamma \leq \alpha_{j} + (y + x) \cdot \gamma$, but also $\beta_{i}+x \cdot \eta_i > \beta_{j} + (y + x) \cdot \beta_j$. 

Thus, we can now write $S=\bigcup_{j=1}^K\lin((\alpha_j,\beta_j),\{(\gamma,\eta)\})$
\vspace{-1.25em}
\subparagraph{$S$ is effectively ultimately periodic: }
Let $\alpha_{\max}=\max\{\alpha_j\}_{j=1}^{K}$, we claim that $S$ is $(\alpha_{\max},\gamma,\eta)$-ultimately periodic.
Let $n\ge \alpha_{\max}$, then $(n,c)\in S$ for some $c\in \bbN$ iff $(n,c)=(\alpha_i+\gamma\cdot m,\beta_i+\eta\cdot m)$ for some $1\le i\le K$ and $m\in \bbN$. This happens iff $(n+\gamma,c+\eta)\in S$, since $(n+\gamma,c+\eta)=(\alpha_i+\gamma\cdot (m+1),\beta_i+\eta\cdot (m+1))$.

Finally, observe that all the constants in the proof are effectively computable.
\end{proof}


\shtodo{we might want to put back the example I commented above.}

We now turn to the implication $2\implies 3$ of \cref{thm:unidetEquivalentConditions}.

\begin{lemma} 
\label{lem:periodicThanUniDet}
Consider an OCN $\cA$ over $\Sigma=\{a\}$. If $\MCR(\cA)$ is increasing and ultimately periodic, then $\cA$ is $\unidet$, and we can effectively compute it.
\end{lemma}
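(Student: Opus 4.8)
The plan is to construct a DOCN $\cD$ directly from the ultimately periodic description of $\MCR(\cA)$, exploiting the fact that over a singleton alphabet a run of length $n$ is completely determined by reading $\sigma^n$, so the only freedom a deterministic machine has is to choose, for each $n$, a single counter profile. Since $\MCR(\cA)$ is increasing, it is the graph of a monotone partial function $n\mapsto c(n)$, and being $(N,k,d)$-ultimately periodic means that for $n\ge N$ the value behaves linearly-periodically: the minimal required initial counter increases by exactly $d$ each time $n$ increases by $k$. The key realization is that \unidet requires, for every initial counter $c$, that $\lang(\cD,c)=\lang(\cA,c)=\{\sigma^n : (n,c')\in\MCR(\cA)\text{ for some }c'\le c\}$. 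By monotonicity of OCNs, $\sigma^n\in\lang(\cA,c)$ iff the minimal counter $c(n)$ satisfies $c(n)\le c$. Thus a single DOCN $\cD$ is uniform-equivalent to $\cA$ precisely when, started from counter $c$, it accepts $\sigma^n$ exactly when $c(n)\le c$; equivalently, $\MCR(\cD)=\MCR(\cA)$.

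The construction I would carry out is as follows. First I would handle the finite ``transient'' part: for $n<N$, the relation $\MCR(\cA)$ specifies finitely many pairs $(n,c(n))$ (some $n$ may be absent, meaning $\sigma^n$ is accepted from no counter), and I would build a finite initial chain of states reading $\sigma^n$ that tracks the exact counter profile, using transitions whose effects realize the required minimal-counter increments. The crux is the periodic tail. Here I would build a simple cycle of length $k$ reading $\sigma^k$ whose total effect is $+d$ (or, thinking in terms of the counter the OCN maintains, a cycle engineered so that the minimal initial counter needed to traverse $m$ copies grows by exactly $d$ per copy). Concretely, one designs the $k$ transitions of the cycle so that its nadir and effect encode the linear growth $c(n)=c(N)+d\cdot\lfloor (n-N)/k\rfloor + (\text{offset within period})$; because the profile within each period of length $k$ is a fixed finite pattern (again readable from $\MCR(\cA)$), one lays out $k$ states each emitting the correct incremental effect so that after $\sigma^n$ the run is alive iff the initial counter is at least $c(n)$.

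The main obstacle I anticipate is proving that such a cycle can actually be realized by an OCN, i.e., that an increasing ultimately-periodic relation is always \emph{implementable} as the $\MCR$ of a \emph{deterministic} single-run machine. The subtlety is that an OCN's minimal-counter function along a single path is $n\mapsto -\nadir{\pi_n}$ where $\pi_n$ is the length-$n$ prefix, and this quantity is the running maximum of partial-sum deficits; it is automatically monotone nondecreasing, which matches ``increasing'', but I must verify that any prescribed increasing ultimately-periodic integer sequence $c(n)$ can be matched by a suitable choice of per-transition effects $e_1,e_2,\dots$ arranged in a $k$-periodic pattern. The required condition is that $c(n)=\max\bigl(0,\max_{i\le n}(-\sum_{j\le i}e_j)\bigr)$, and I would argue that choosing $e_j$ so the partial sums decrease by the correct per-step increments (padding with non-negative steps where $c(n)$ stays flat) realizes exactly this; the periodicity of the $e_j$ pattern follows from the $(N,k,d)$-periodicity of $c$, and since the only letter is $\sigma$ the machine is trivially deterministic. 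Finally I would verify $\MCR(\cD)=\MCR(\cA)$ by checking agreement on the finite prefix and on one period, which by ultimate periodicity of both sides extends to all $n$, and conclude $\lang(\cD,c)=\lang(\cA,c)$ for all $c$, i.e., $\cA$ is \unidet, with all parameters effectively computed from the effective periodicity constants.
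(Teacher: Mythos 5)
Your proposal is correct and takes essentially the same route as the paper: both constructions build a ``lollipop'' DOCN --- a transient chain handling $n<N$ followed by a cycle of length $k$ --- whose per-transition effects are the (non-positive) successive differences of the completed minimal-counter function, so that the nadir after $n$ steps is exactly $-c(n)$, with accepting states chosen by membership in $\MCR(\cA)$; your identified verification condition $c(n)=\max\bigl(0,\max_{i\le n}(-\sum_{j\le i}e_j)\bigr)$ is precisely what the paper's choice $e_i=f(i)-f(i+1)\le 0$ trivializes. The only slip is cosmetic: the cycle's total effect should be $-d$ rather than $+d$, which your own parenthetical remark already corrects.
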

\begin{proof}
Assume $\MCR(\cA)$ is $(N,k,d)$-ultimately periodic.
%
%
%
We start by completing $\MCR(\cA)$ to a (full) function $f:\bbN\to \bbN$ as follows: set $f(0)=0$, and for $n>0$ inductively define $f(n)=c$ if $(n,c)\in \MCR(\cA)$, or $f(n)=f(n-1)$ otherwise. That is, $f$ matches $\MCR(\cA)$ on its domain, and remains fixed between defined values. Observe that there is no violation in defining $f(0)=0$, since if $(0,c)\in \MCR(\cA)$, then $c=0$, as the empty word requires a minimal counter of $0$ to be accepted.
%

We now use $f$ to obtain a DOCN $\cD$ as depicted in \cref{fig:uniformEquivIllustration}. 
 Formally, we construct $\cD=\tup{\{\sigma\},Q,q_0,\delta,F}$ as follows. 
\begin{itemize}
    \item $Q=\left\{q_i\right\}_{i=1}^{N+k-1}$.
    \item $\delta=\left\{(q_i,a,f(i)-f(i+1),q_{i+1})\right\}_{i=1}^{N+k-2} \cup \left\{(q_{N+k-1},a,f(N)+d-f(N+k-1),q_N)\right\}$.
    \item $F=\{q_i:(i,f(i))\in \MCR(\cA),\ 1\le i\le N+k-1\}$.
\end{itemize}

Observe that since $f$ is increasing (as $\MCR(\cA)$ is increasing), the weight of all transitions in $\cD$ is non-positive.

\begin{figure}[h]
\centering
\includegraphics[page=4]{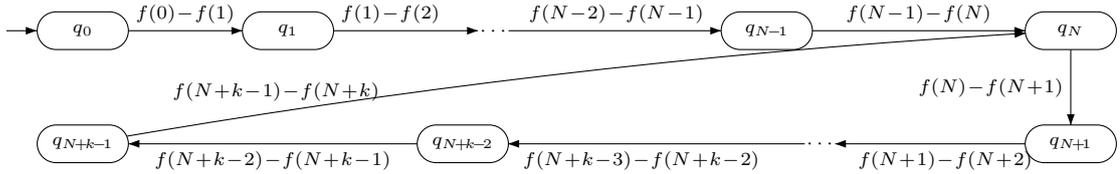}
\caption{An illustration of the construction method for a uniform-deterministic-equivalent of an OCN $\cA$, given $f$. Accepting states are not mentioned in the illustration.}
\label{fig:uniformEquivIllustration}
\end{figure}

We claim that for every $c$, $\lang(\cA,c)=\lang(\cD,c)$. To show this, observe that for every $n\in \bbN$ we have that the sum of weights along $n$ consecutive transitions of $\cD$ (ignoring the OCN semantics) is exactly $-f(n)$. In particular, if $\sigma^n\in \lang(\cA,c)$, then $(n,c')\in \MCR(\cA)$ for some $c'\le c$ and $f(n)=c'$. Indeed, this is trivial for $n\le N+k-1$, and for $n>N+k-1$ this follows immediately from $(N,k,d)$-ultimate periodicity.

Thus, if $\sigma^n\in \lang(\cA,c)$ then there exists $c'\le c$ such that $(n,c')\in  \MCR(\cA)$ it follows that with initial counter $c$, $\cD$ can traverse $n$ transitions. Moreover, the state reached is accepting, since $(n,c')\in \MCR(\cA)$, so $\sigma^n\in \lang(\cD,c)$.

Conversely, if $\sigma^n\in \lang(\cD,c)$ then $c\ge f(n)$ and $(n,f(n))\in \MCR(\cA)$, thus, $\sigma^n\in \lang(\cA,c)$.

Finally, observe that the construction is computable given the parameters of ultimate periodicity.
\end{proof}

We now address the implication $3\implies 1$.
\begin{lemma}
\label{lem:unidetImpliesIncreasing}
Consider an OCN $\cA$ over $\Sigma=\{\sigma\}$. If $\cA$ is \unidet, then $\MCR(\cA)$ is increasing.
\end{lemma}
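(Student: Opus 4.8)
The plan is to reduce the statement to a structural fact about deterministic OCNs over a unary alphabet. Since $\cA$ is \unidet, there is a DOCN $\cD$ with $\lang(\cA,c)=\lang(\cD,c)$ for every $c\in\bbN$. Because $\MCR$ is determined entirely by the family $\{\lang(\cdot,c)\}_{c\in\bbN}$ (for each $n$, the set of counters accepting $\sigma^n$, and hence its minimum, depends only on the languages), this immediately gives $\MCR(\cA)=\MCR(\cD)$. So it suffices to show that $\MCR(\cD)$ is increasing.

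First I would exploit determinism over the singleton alphabet $\Sigma=\{\sigma\}$. Determinism means each state has at most one outgoing transition, so from the initial state $\cD$ admits a \emph{unique} maximal path, and for each $n$ there is at most one path $\pi_n$ of length $n$. Crucially, these paths are nested: whenever $\pi_{n_2}$ exists and $n_1\le n_2$, the path $\pi_{n_1}$ exists and is a prefix of $\pi_{n_2}$.

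Next I would characterize $\MCR(\cD)$ in terms of nadirs. By the semantics recalled in the preliminaries, a path $\pi$ induces a run from initial counter $c$ iff $c\ge -\nadir{\pi}$. Since $\cD$ is deterministic, $\sigma^n\in\lang(\cD,c)$ iff $\pi_n$ exists, ends in an accepting state, and $c\ge -\nadir{\pi_n}$. Hence, for $n$ in the domain of $\MCR(\cD)$, the minimal accepting counter is exactly $-\nadir{\pi_n}$; that is, $(n,c)\in\MCR(\cD)$ iff $\pi_n$ ends in an accepting state and $c=-\nadir{\pi_n}$.

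Finally I would conclude monotonicity. Take $(n_1,c_1),(n_2,c_2)\in\MCR(\cD)$ with $n_1\le n_2$. By nesting, $\pi_{n_1}$ is a prefix of $\pi_{n_2}$, and since the nadir is the minimal prefix-effect, extending a path can only lower it: $\nadir{\pi_{n_2}}\le\nadir{\pi_{n_1}}$. Therefore $c_1=-\nadir{\pi_{n_1}}\le -\nadir{\pi_{n_2}}=c_2$, which is exactly the increasing condition (the equality case $n_1=n_2\Rightarrow c_1=c_2$ holds because $\MCR$ is the graph of a partial function). The argument has no real obstacle; the only point that needs care is the observation that unary determinism forces the single nested-path structure, after which monotonicity of the nadir under prefix extension does all the work.
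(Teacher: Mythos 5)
Your proof is correct and rests on the same key fact as the paper's own argument: determinism over a unary alphabet forces the run on $\sigma^{n_1}$ to be a prefix of the run on $\sigma^{n_2}$, so the minimal accepting counter cannot drop as $n$ grows. The paper phrases this as a contradiction at the language level (if $c_1>c_2$, the run on $\sigma^{n_1}$ from counter $c_2$ survives and ends non-accepting, hence so does the identical run from $c_1$), whereas you compute the minimal counter explicitly as $-\nadir{\pi_n}$ and invoke monotonicity of the nadir under prefix extension; the two presentations are interchangeable.
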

\begin{proof}
Let $\cD$ be a DOCN such that $\lang(\cA,c)=\lang(\cD,c)$ for every $c$, and let $(n_1,c_1), (n_2,c_2) \in \MCR(\cA)$ with $n_1\le n_2$. Assume by way of contradiction that $c_1>c_2$, then $\sigma^{n_2}\in \lang(\cD,c_2)$, but $\sigma^{n_1}\notin \lang(\cD,c_2)$. It follows that the run of $\cD$ on $\sigma^{n_1}$ must end in a non-accepting state starting from counter value $c_2$ (i.e., the counter does not become negative). But then the same run is taken from counter value $c_1$, so $\sigma^{n_1}\notin \lang(\cD,c_1)$, which is a contradiction.
\end{proof}

By \cref{lem:MCRisSemilinear}, $\MCR(\cA)$ is semilinear. Thus, if $\MCR(\cA)$ is increasing, then by \cref{lem:semilinearAndIncreadingThenPeriodic} it is also effectively ultimately periodic. This completes the implication $1\implies 2$, and the implications $2\implies 3$ and $3\implies 1$ are immediate from \cref{lem:periodicThanUniDet,lem:unidetImpliesIncreasing}, respectively. This completes the proof of \cref{thm:unidetEquivalentConditions}.

Finally, having characterized \unidet, we can show its decidability.
\begin{theorem}
\unidet is decidable.
\end{theorem}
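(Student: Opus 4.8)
The plan is to leverage the characterization in \cref{thm:unidetEquivalentConditions}, which reduces \unidet to checking whether $\MCR(\cA)$ is increasing. Since by \cref{lem:MCRisSemilinear} we can effectively compute a PA formula $\theta(x,y)$ defining $\MCR(\cA)$, the whole problem becomes a matter of expressing ``$\MCR(\cA)$ is increasing'' as a closed PA sentence and then invoking the decidability of Presburger Arithmetic.

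Concretely, I would proceed as follows. First, by \cref{lem:MCRisSemilinear}, construct the PA formula $\theta(x,y)$ with $\sem{\theta}=\MCR(\cA)$. Next, recall from the definition that a set $S\subseteq\bbN^2$ is increasing iff it is the graph of an increasing partial function, which directly unfolds into a first-order condition over its elements: for all $(n_1,c_1),(n_2,c_2)\in S$, if $n_1\le n_2$ then $c_1\le c_2$ (note this single clause already subsumes the $n_1=n_2\implies c_1=c_2$ requirement, by applying it in both directions). I would therefore write the sentence
\[
\Phi \;=\; \forall n_1\,\forall c_1\,\forall n_2\,\forall c_2\ \big(\theta(n_1,c_1)\wedge \theta(n_2,c_2)\wedge n_1\le n_2\big)\to c_1\le c_2.
\]
Since $\theta$ is a PA formula and $\Phi$ is built from it using only quantifiers, Boolean connectives, and the order relation, $\Phi$ is itself a PA sentence. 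By the characterization, $\cA$ is \unidet iff $\MCR(\cA)$ is increasing iff $\Phi$ holds. The decidability of PA then lets us decide whether $\Phi$ is true, which decides \unidet.

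There is no real obstacle here, precisely because all the heavy lifting was done in \cref{thm:unidetEquivalentConditions} and \cref{lem:MCRisSemilinear}; the one thing to be careful about is confirming that the ``increasing'' property as defined in the text is genuinely expressible in PA, which it is, since it involves only universally quantified linear (in)equalities over the semilinear predicate $\theta$. The only mild subtlety worth a sentence of justification is that checking the graph is increasing already guarantees it is a (partial) \emph{function}: if $(n,c_1),(n,c_2)\in\MCR(\cA)$ then applying $\Phi$ both ways forces $c_1=c_2$; but in fact $\MCR(\cA)$ is a function by construction (the footnote notes it is the graph of a partial function), so this is automatic. I would close by noting that, combining decidability with \cref{thm:unidetEquivalentConditions}, whenever the answer is positive we can moreover effectively compute an equivalent DOCN, making the procedure constructive and not merely a decision.
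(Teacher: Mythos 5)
Your proposal is correct and follows essentially the same route as the paper: invoke \cref{thm:unidetEquivalentConditions} to reduce \unidet to checking that $\MCR(\cA)$ is increasing, use \cref{lem:MCRisSemilinear} to get a PA formula for $\MCR(\cA)$, and express the increasing property as a PA sentence (the paper phrases the negation existentially, you phrase it universally, which is equivalent here since $\MCR(\cA)$ is a partial function). No gaps.
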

\begin{proof}
Consider an OCN $\cA$. By~\cref{thm:unidetEquivalentConditions}, it suffices to show that it is decidable whether $\MCR(\cA)$ is increasing. By~\cref{lem:MCRisSemilinear}, we can compute a PA formula $\theta(n,c)$ such that $\sem{\theta}=\MCR(\cA)$. 
We now state the assertion that $\MCR(\cA)$ is not increasing as the following sentence in PA:
$\chi=\exists n_1,n_2,c_1,c_2,\ n_1<n_2\wedge c_1>c_2 \wedge \theta(n_1,c_1)\wedge \theta(n_2,c_2)$.

Since PA is decidable, we can decide whether this sentence holds, so we are done.
\end{proof}




\subsection{Triviality of $\zerodet$, $\foralldet$, $\existsdet$}
We now turn to study the remaining notions of determinization for singleton alphabet. 
\begin{theorem}
\label{thm:singletonTrivialDet}
Consider an OCN $\cA$ over $\Sigma=\{\sigma\}$, then $\cA$ is \foralldet, \zerodet, and \existsdet.
\end{theorem}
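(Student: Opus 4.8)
The plan is to prove the three triviality claims separately, starting from the observation (\cref{obs:definitionImplications}) that \zerodet\ implies \existsdet, so it suffices to establish \foralldet\ and \zerodet\ for every OCN $\cA$ over a singleton alphabet. In fact, since $0\in\bbN$, proving \foralldet\ (an equivalent DOCN exists for \emph{every} initial counter $c$) immediately yields \zerodet\ (the case $c=0$) and hence \existsdet. So the whole theorem reduces to the single claim: \emph{for every OCN $\cA$ over $\Sigma=\{\sigma\}$ and every $c\in\bbN$, the language $\lang(\cA,c)$ has a DOCN}.

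The key idea is that for a \emph{fixed} initial counter $c$, the language $\lang(\cA,c)\subseteq\{\sigma\}^*$ is a unary language, and I claim it is ultimately periodic, hence regular, hence recognized by a DFA (which is trivially a DOCN, ignoring its counter). To see ultimate periodicity, first recall from \cref{lem:MCRisSemilinear} that $\MCR(\cA)$ is semilinear, and note that $\sigma^n\in\lang(\cA,c)$ iff there exists $c'\le c$ with $(n,c')\in\MCR(\cA)$. Thus the set $A_c=\{n:\sigma^n\in\lang(\cA,c)\}$ is the projection onto the first coordinate of $\MCR(\cA)\cap(\bbN\times\{0,1,\dots,c\})$. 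Since $\MCR(\cA)$ is semilinear and semilinear sets are closed under intersection with the semilinear set $\bbN\times\{0,\dots,c\}$ and under projection, $A_c$ is a semilinear subset of $\bbN$, i.e.\ a finite union of linear sets, which is precisely an ultimately periodic set. An ultimately periodic unary language is regular, so there is a DFA, and hence a DOCN, accepting $\lang(\cA,c)$.

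First I would state and justify the equivalence $\sigma^n\in\lang(\cA,c)\iff \exists c'\le c,\ (n,c')\in\MCR(\cA)$, which follows directly from the definition of $\MCR$ together with the monotonicity of OCNs noted in the preliminaries (if $\sigma^n$ is accepted from its minimal counter $c'$, it is accepted from every $c\ge c'$). Next I would invoke \cref{lem:MCRisSemilinear} and the standard closure properties of semilinear sets (intersection with a semilinear set and projection) to conclude that $A_c$ is semilinear in one dimension, and then cite the elementary fact that one-dimensional semilinear sets coincide with ultimately periodic sets. Finally, since a DFA is a DOCN whose language does not depend on the counter, the constructed DFA witnesses \foralldet\ for the chosen $c$; as $c$ ranges over all of $\bbN$ this gives \foralldet, and the special case $c=0$ gives \zerodet, whence \existsdet\ by \cref{obs:definitionImplications}.

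The main obstacle, though it is mild, is making sure the semilinearity-to-regularity step is airtight: I must ensure that projecting a two-dimensional semilinear set after bounding its second coordinate indeed yields a one-dimensional semilinear (equivalently ultimately periodic) set, and that this argument is uniform enough that no subtlety about the minimality condition in $\MCR$ creeps in. In practice this is routine because the bound $c'\le c$ is itself expressible in Presburger arithmetic, so $A_c$ is directly PA-definable as $\exists c'\,(c'\le c\wedge \theta(n,c'))$ where $\theta$ defines $\MCR(\cA)$, and every PA-definable subset of $\bbN$ is ultimately periodic by~\cite{ginsburg1964bounded}. I would phrase the proof through this PA formulation to avoid invoking projection lemmas explicitly, keeping the argument short and self-contained.
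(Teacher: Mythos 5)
Your proposal is correct and follows essentially the same route as the paper: reduce everything to \foralldet via \cref{obs:definitionImplications}, use \cref{lem:MCRisSemilinear} to get a PA formula for $\MCR(\cA)$, bound the counter coordinate by the fixed $c$ (the paper writes the finite disjunction $\bigvee_{i=0}^{c}\phi(n,i)$ where you write $\exists c'\,(c'\le c\wedge\theta(n,c'))$, which defines the same set), and conclude that the resulting one-dimensional semilinear length set gives a regular unary language, whose DFA is a DOCN. The only cosmetic difference is that you make the monotonicity step explicit, which the paper leaves implicit in the definition of $\MCR$.
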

\begin{proof}
By~\cref{obs:definitionImplications}, it is enough to prove that $\cA$ is \foralldet. To this end, recall that by~\cref{lem:MCRisSemilinear}, $\MCR(\cA)$ is PA definable by a formula $\phi(n,c)$.

For every initial counter value $c$, define $\phi_{\le c}(n)=\bigvee_{i=0}^{c}\phi(n,i)$, then $\sem{\phi_{\le c}(n)}=\{n: \cA \text{ accepts }\sigma^n\text{ with initial counter }c\}$. Then, we can write $\lang(\cA,c)=\{\sigma^{m}: m\in \sem{\phi_{\le c}(n)}\}$. 

It is folklore that a singleton-alphabet language whose set of lengths is semilinear, is regular. We bring a short proof of this for completeness:
Let $S=\bigcup_{i=1}^{k}\lin(c_i,p_i)\subseteq \bbN$ be a semilinear set (by assuming that the periods are linearly independent, it follows each has a single number), and let $L_S=\{\sigma^k: k\in S\}$. For every $i$, the language $\{a^k | k \in \lin(c_i,p_i)\}$ can be defined by the regular expression $r_i=\sigma^{c_i}(\sigma ^{p_i})^*$. So $L_S$ is defined by the regular expression $r=r_1+ \cdots +r_k$. 

Thus, for every $c\in \bbN$, we have that $\lang(\cA,c)$ is regular, and in particular is recognized by a DOCN, so $\cA$ is \foralldet, and we are done.
\end{proof}

\section{Discussion and Future Work}
\label{sec:discussion}
In this work, we introduce and study notions of determinization for OCNs. We demonstrate that the notions, while comparable in strictness, are distinct both from a conceptual perspective, having different motivations, as well as from a technical perspective: the mathematical tools needed to analyze them vary.

The most obvious direction for future work is resolving the decidability status of \unidet. Note that \unidet bears some similarities to the determinization problem for tropical automata, in that in both cases we need to match exactly the set of counters (or weights). As the latter problem is famously open, it could well be that \unidet is similarly difficult.

Finally, we do not give complexity bounds for the decidability result for \unidet in the singleton case. This follows from the black-box nature of our proof. Delving into the details of the construction in~\cite{leroux2004flatness}, or using methods such as those in~\cite{10.1145/3464794}, and combining them with our specific setting may yield some bounds.

\bibliography{bibliography}

\appendix

\section{Proofs}
\label{app:determinization}

\subsection{Proof of \cref{lem:definitionsNotCoincide}.}
\label{app:definitionsNotCoincide}
\subsubsection{$\cA$ is $\existsdet$, but not $\zerodet$}
\label{app:existsButNotZero}

We define formally $\cA=\tup{\{q_0,q',q'',q_5\},\{a,b,c,\#\},q_0,\delta_{\cA},\{q',q'',q_5\}}$, for:

$\delta_{\cA}=\tup{(q_0,\#.-5,q_5),(q_0,\#.0,q'),(q_0,\#.0,q''),(q',a,1,q'),(q',b,0,q')} \cup \\
\left\{(q',c,-1,q'),(q'',a,0,q''),(q'',b,1,q''),(q'',c,-1,q''),(q_5,a,0,q_5),(q_5,b,0,q_5),(q_5,c,0,q_5)\right\}$.

$\cA$ is $\existsdet$, since $\lang(\cA,k)=\Sigma^*$ for $k \geq 5$. Now, assume by way of contradiction that $\cA$ is $\zerodet$, and let $\cD$ be a deterministic OCN with $n\in \bbN$ states that satisifies $\lang(\cA,0)=\lang(\cD,0)$.
We now define $w=\#c^{n+1}a^{n+1}b^{n+1}$. throughout the run of $\cD$ on $w$, $\cD$ travels through a cycle $\beta_1$ when reading $a^{n+1}$, and a cycle $\beta_2$ when reading $b^{n+1}$. If the cumulative costs of both $\beta_1$ and $\beta_2$ are non-negative, then $\cD$ accepts $w'=\#c^{n+1}a^{N}b^{N}$ for arbitrarily large $N\in \bbN$, which contradicts $\lang(\cA,0)=\lang(\cD,0)$. Otherwise, the cumulative cost of either $\beta_1$ or $\beta_2$ is negative, w.l.o.g $\beta_1$. In this case, $w''=\#c^{n+1}a^{N}$ is not accepted by $\cD$ for sufficiently large $N \in \bbN$, which again contradicts $\lang(\cA,0)=\lang(\cD,0)$.\hfill \qed

\subsubsection{$\cB$ is $\zerodet$, but not $\foralldet$}
\label{app:zeroButNotForall}

We define formally $\cB=\tup{\{q_0,q',q''\},\{a,b,c,\#\},q_0,\delta_{\cB},\{q',q''\}}$, for:

$\delta_{\cB}=\left\{(q_0,\#.-1,q'),(q_0,\#.-1,q''),(q',a,1,q'),(q',b,0,q'),(q',c,-1,q')\right\} \cup \\
\left\{(q'',a,0,q''),(q'',b,1,q''),(q'',c,-1,q'')\right\}$.

Since $\lang(\cB,0)=\emptyset$, $\cB$ is $\zerodet$ trivially. However, since with initial counter 0, both $(q_0,\#.-1,q')$ and $(q_0,\#.-1,q'')$ cannot be traversed, we have that $\lang(\cB,1)=\lang(\cA,0)$. therefore, as can be shown by an identical analysis to the one presented in \cref{app:existsButNotZero}, there is no deterministic OCN $\cD$ that satisfies $\lang(\cB,1)=\lang(\cD,0)$, and $\cB$ is not $\foralldet$. \hfill \qed

\subsubsection{$\cC$ is $\foralldet$, but not $\unidet$}
\label{app:ForallButNotUni}

We define formally $\cC=\tup{\{q_0,q_1,q_2\},\{a,b,\#\},q_0,\delta_{\cC},\{q_1,q_2\}}$, for:

$\delta_{\cC}=\left\{(q_0,\#.0,q_1),(q_0,\#.-1,q_2),(q_1,a.1,q_1),(q_1,b,-1,q_1)\right\} \cup \\
\left\{(q_2,a,0,q_2),(q_2,b,0,q_2)\right\}$.

For initial counter $0$, the transition $(q_0,\#.-1,q_2)$ cannot be traversed, therefore $\cC$ is $\zerodet$, since $\cD=\tup{\{q_0,q_1\},\{a,b,\#\},q_0,\left\{(q_0,\#.0,q_1),(q_1,a.1,q_1),(q_1,b,-1,q_1)\right\},\{q_1\}}$ satisfies $\lang(\cD,0)=\lang(\cC,0)$. In addition, $\lang(\cC,k)=\#\{a,b\}^*$ for all $k \geq 1$. Hence $\cC$ is $\foralldet$.

Now assume by way of contradiction that $\cC$ is $\unidet$, and let $\cD$ be a deterministic OCN with $n \in \bbN$ states that satisfies $\lang(\cD,k)=\lang(\cC,k)$ for all $k \in \bbN$, and let $w=\#a^{n+1}b^{n+1} \in \lang(\cD,k)$ for all $k \in \bbN$. $\cD$ travels through a cycle $\beta$ when reading $b^{n+1}$. If the cumulative weight of $\beta$ is non-negative, then $w'=\#a^{n+1}b^{N} \in \lang(\cD,0)$ for arbitrarily large $N \in \bbN$, which contradicts $\lang(\cD,0)=\lang(\cC,0)$. If, however, the cumulative weight of $\beta$ is negative, then $w'=\#a^{n+1}b^{N} \notin \lang(\cD,1)$ for large enough $N \in \bbN$, which in turn contradicts $\lang(\cD,1)=\lang(\cC,1)$. \hfill \qed


\subsection{Proof of \cref{lem:finFromZeroUndecidable}}
\label{app:finFromZeroUndecidable}
We prove undecidability of $\finreachzero$ using a straightforward reduction from $\finreach$. Given an LCM $\cM=(\loc,\cC,\Delta)$ and a configuration $\sigma_{0}=\tup{q,(a_1,a_2 \ldots a_n)}$, we define an LCM $\cM'$ with a new initial state $q_0$ that leads to $q$ with a single path that increments $z_1$ $a_1$ times, $z_2$ $a_2$ times, etc. 

Formally, If $a_i=0$ for all $0 \leq i \leq n$, we define $\cM'=\cM$ and the reduction is trivial. Otherwise, we define $\cM'=(\loc',\cC,\Delta')$ such that:
\begin{itemize}
    \item $\loc'=\loc \cup \{q_0\} \cup \left\{\{q_i\}_{i=1}^{\Sigma_{j=1}^n a_j -1}\right\}$. Note that if $\Sigma_{j=1}^n a_j=1$, the only new state added is $q_0$.  
    \item $\Delta'=\Delta \cup \left\{(q_{\Sigma_{j=1}^n a_j -1}, (z_y, \inc), q)\right\} \cup \left\{(q_i, (z_x, \inc), q_{i+1})\right\}$ such that $y$ is the largest integer $0 \leq y \leq n$ for which $a_y \neq 0$, and the parameter $x$ varies such that throughout the $\Sigma_{j=1}^n a_j$ transitions, each counter $z_i$ is incremented exactly $a_i$ times.
\end{itemize}

The reachability set of $\sigma_{0}=\tup{q,(a_1,a_2 \ldots a_n)}$ under $\cM$ is finite iff the reachability sets of all configurations $\sigma_{0}'=\tup{q,(a_1',a_2' \ldots a_n')}$ such that $a_i' \leq a_i$ for all $i$ are finite, due to monotonicity of LCMs. This, in turn, is satisfied iff the reachability set of $\tup{q_0,(0,0 \ldots 0)}$ under $\cM'$ is finite. 
\hfill\qed

\subsection{Proof of \cref{lem:lcmToOcnConstruction}}
\label{app:lcmToOcnConstruction}

We start by describing several gadgets used in the construction.

\subsubsection{Gadgets} \label{sec:lcmToOcnSubs}
Let $\cM=\tup{\loc,\counters,\Delta}$ be an LCM, let $z_i \in \counters$, and let $(\ell_1,\op,\ell_2) \in \Delta$. Our goal is to construct an OCN $\cA$ that reads two consecutive configuration encodings - an encoding that corresponds to a visit in $\ell_2$ and then an encoding that corresponds to a visit in $\ell_1$, such that $w \in \lang(\cA,0)$ iff $w$ admits a violation for counter $z_i$. 

The structure of $\cA$ depends on the value of $\op$, which can any of the following:
\begin{enumerate}
    \item $z_i\inc$, i.e., increment $z_i$,
    \item $z_i\dec$, i.e., decrement $z_i$,
    \item $z_j\inc$ or $z_j\dec$ for $j\neq i$, which does not affect $z_i$,
    \item $z_i\jz$, i.e., test $z_i$ for 0.
\end{enumerate}
In addition, we have a special gadget to capture violations in the initial configuration, namely if the counter values is not 0 (recall that the initial configuration is read last, since the encoding is reversed).

Thus, $\cA$ can be any of the gadgets presented in figure \ref{fig:violationCheckers} (depending on $\op$). 

\begin{figure}[ht]
\captionsetup[subfigure]{justification=centering}
\begin{subfigure}[b]{.5\linewidth}
\centering \includegraphics[page=1]{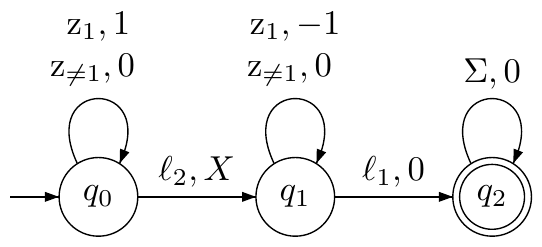}
\caption{Gadgets for scenarios 1,2, and 3, by setting $X$ to be $-2,0$, and $-1$, respectively.}
\end{subfigure}%
\begin{subfigure}[b]{.5\linewidth}
\centering \includegraphics[page=2]{graphics/gadgets-pics.pdf}
\caption{Gadget for scenario 4.}
\end{subfigure}\\
\begin{subfigure}[b]{.4\linewidth}
\centering \includegraphics[page=3]{graphics/gadgets-pics.pdf}
\caption{Gadget for initial configuration (last one in the reverse encoding).}
\end{subfigure}%
\caption{The violation-check gadgets for $z_1$. By $z_{\neq 1}$ we mean $z_j$ for all $j\neq 1$, and by $z_j$ we mean every counter.}
\label{fig:violationCheckers}
\end{figure}

Formally, we define $\cA=\tup{\Sigma,\{q_0,q_1,q_2\},q_0,\delta,\{q_2\}}$ such that:
\begin{itemize}
    \item $\Sigma=\loc \cup \left\{\left\{a_i\right\}_{z_i \in \counters}\right\}$.
    \item $\delta=\left\{(q_0,a_j,0,q_0)\right\}_{j \neq i} \cup  \left\{(q_0,a_i,1,q_0)\right\} \cup 
    \left\{(q_0,\ell_2,\nu,q_1)\right\} \cup 
    \left\{(q_1,a_j,0,q_1)\right\}_{j \neq i} \cup 
    \left\{(q_1,a_i,-1,q_1)\right\} \cup 
    \left\{(q_1,\ell_1,0,q_2)\right\} \cup 
    \left\{(q_2,\sigma,0,q_2)\right\}_{\sigma \in \Sigma}$.
\end{itemize}


For the initial configuration checker, we define $\cA=\tup{\Sigma,\{q_0,q_1\},q_0,\delta,\{q_1\}}$ such that:
\begin{itemize}
    \item $\Sigma=\loc \cup \left\{a_i\right\}_{z_i \in \counters}$.
    \item $\delta=\left\{(q_0,a_j,1,q_0)\right\}_{z_j \in \counters} \cup        \left\{(q_0,\ell_0,-1,q_1)\right\}$.
\end{itemize}

Our last gadget captures ill-formed words, regardless of counter values.

Let $\cM=\tup{\loc,\counters,\Delta}$ be an LCM. we say that a word $w$ is \emph{well formed} if the following conditions are satisfied:
\begin{enumerate}
    \item $w$ is of the form $w=a_n^*\cdots a_1^*\ell_{iN} \cdots a_n^*\cdots a_1^* \ell_{i0}$ for $\left\{\ell_{ij} \in \loc\right\}_{0 \leq j \leq N}$.
    \item $\ell_{i0}=\ell_0$.
    \item for every $0 \leq j \leq N-1$, there is at least one transition in $\cM$ that leads from $\ell_{ij}$ to $\ell_{i,j+1}$.
\end{enumerate}

It is easy to see that well formed words are a regular language, and in particular its complement is the desired OCN.

\subsubsection{The Main Construction}

Let $\cM=\tup{\loc,\counters,\Delta}$. We wish to construct an OCN $\cA$ such that $\lang(\cA,0)$ is the set of all words that do not represent legal runs of $\cM$.

Intuitively, we construct $\cA$ through the following process:
\begin{enumerate}
    \item Construct a flow violation checker (with regards to $\cM$), which will be part of $\cA$ as a separate component.
    \item for every location $\ell \in \loc$, add a corresponding state $\ell'$ in $\cA$. all such $\ell'$'s are initial states in $\cA$, and they all have self loops with weight 0 when reading all counter accumulators $\left\{a_i\right\}_{z_i \in \counters}$. Intuitively, when $\cA$ visits a state $\ell'$, it means that $\cA$ is currently in the process of reading a configuration in which $\cM$ is in location $\ell$.
    \item for every transition $(\ell_1, \op, \ell_2) \in \Delta$, add the transition $(\ell_2', \ell_2, 0, \ell_1')$ to $\cA$. Intuitively, traveling this transition means that $\cA$ has finished reading a configuration of location $\ell_2$, and is now starting to read a configuration of location $\ell_1$. 
    \item connect an initial configuration violation checker to $\ell_0'$.
    \item for every transition $(\ell_1, \op, \ell_2) \in \Delta$, add from $\ell_2'$  transitions to all relevant violation checkers for all counters $\left\{ z_i\right\}_{1 \leq i \leq n}$.   
\end{enumerate}

Now let us define the construction formally. Let $V(\ell_i \rightarrow \ell_j, z_m)$ be the violation checker that matches the transition $(\ell_i, \op, \ell_j)$ for counter $z_m$, as detailed in \cref{sec:lcmToOcnSubs}. Let $Q(\ell_i \rightarrow \ell_j, z_m)$ be its states, let $F(\ell_i \rightarrow \ell_j, z_m)$ be its accepting states, let $\delta(\ell_i \rightarrow \ell_j, z_m)$ be its transitions, and $\lambda(\ell_i \rightarrow \ell_j, z_m) \subseteq \delta(\ell_i \rightarrow \ell_j, z_m)$ be the
 transitions from its initial state. In that spirit we also define, with regards to the flow control violation checker, and the initial configuration violation checker: $Q(\text{initial})$, $\delta(\text{initial})$, $\lambda(\text{initial})$, $Q(\text{flow})$, $\delta(\text{flow})$, $\lambda(\text{flow})$. Lastly, for convenience' sake alone we define $\cA$ as having multiple initial states. this has been done for readability, and can easily be formally circumvented by defining a single initial state $\alpha_0$, along with an outgoing transition $(\alpha_0,\sigma,z,q)$ for each $(s_0,\sigma,z,q) \in \delta$.

We now define $\cA=\tup{\Sigma,Q,S_0,\delta,F}$ such that:
\begin{itemize}
    \item $\Sigma = \loc \cup \left\{a_i\right\}_{z_i \in \counters}$
    \item $Q=\left\{\ell_i'\right\}_{\ell_i \in \loc} \cup Q(\text{initial}) \cup Q(\text{flow}) \cup \left\{Q(\ell_i \rightarrow \ell_j, z_m)\right\}$ for all $\ell_i,\ell_j \in \loc$ such that there is a transition from $\ell_i$ to $\ell_j$ in $\Delta$, and for all $1 \leq i \leq m$.
    \item $S_0=\left\{\ell_i'\right\}_{\ell_i \in \loc} \cup \{s_{0,\text{flow}}\}$ such that $s_{0,\text{flow}}$ is the initial state of the flow violation checker.
    $\delta_1=\left\{(\ell_i',a_j,0,\ell_i')\right\}$ for all $\ell_i \in \loc$ and for all $1 \leq j \leq n$.
    \item $\delta_2=\left\{(\ell_i',\ell_i,0,\ell_j')\right\}$ for all $\ell_i,\ell_j \in \loc$ such that there is a transition from $\ell_j$ to $\ell_i$ in $\Delta$.
    \item $\delta_3=\left\{(\ell_i',\sigma,\nu,q')\right\}$ for all $\ell_i,\ell_j \in \loc$ such there is a transition from $\ell_j$ to $\ell_i$ in $\Delta$, and $(q,\sigma,\nu,q')\in \lambda(\ell_j \rightarrow \ell_i, z_m)$ for some $1 \leq i \leq m$, or otherwise $(q,\sigma,\nu,q')\in \lambda(\text{initial})$.
    \item $\delta_V= \bigcup_{\text{all violations}}\delta(\text{violation})$. 
    \item $\delta=\delta_1 \cup \delta_2 \cup \delta_3 \cup \delta_V$
    \item $F=\bigcup_{\text{all violations}}F(\text{violation})$.
\end{itemize}

We turn to prove the correctness of the construction. 
Consider a word $w$ that represents a legal run of $\cM$. Then, first of all, $w$ is well formed, and therefore not accepted by the flow violation checker. second, there is no transition from one configuration to the next that involves a violation, and therefore $w$ cannot be accepted through any of the violation checkers in $\cA$. Since all accepting states of $\cA$ are inside violation checkers, $w \notin \lang(\cA,0)$.

Conversely, assume a word $w$ does not represent a legal run of $\cM$. If $w$ is not well formed, then it is accepted through the flow violation checker. Otherwise - a transition from a state $\ell_i \in \loc$ to a state $\ell_j \in \loc$ represents a violation for counter $z_m$ such that $1 \leq m \leq n$. $\cA$ then accepts $w$ by branching from $\ell_j'$ to $V(\ell_i \rightarrow \ell_j, z_m)$ at the right moment. It is also possible that the violation occurs in the first configuration (last one to be read), and in this case $w$ will be accepted through the initial configuration violation checker.

\subsection{Construction of $\cD$ in \cref{lem:finReachThendet}}
\label{app:finReachThendet}
\shtodo{I changed $q$ to $\ell$ and $Q$ to $\loc$ and $a_j$ to $z_j$ for the letters. Please go over this to make sure there are no mistakes.}
We define $\cD=\tup{\Sigma,Q',s_0',\delta',F'}$, such that:
\begin{itemize}
    \item $Q'=\left\{\tup{\ell,a_{1} \ldots a_{k}, b_{1} \ldots b_{k}} | \ell \in \loc, 0 \leq a_{i},b_{i} \leq m \text{ for all } 1 \leq i \leq k \right\} \cup \\
    \left\{\tup{\bot,\bot \ldots \bot , b_{1} \ldots b_{k}} | 0 \leq b_{i} \leq m \text{ for all } 1 \leq i \leq k \right\}$.
    \item $s_0=\tup{\bot,\bot \ldots \bot , 0 \ldots 0}$.
    \item $\delta'(\tup{\ell,a_{1} \ldots a_{k}, b_{1} \ldots b_{k}},\ell')=\tup{\ell',b_{1} \ldots b_{k} , 0 \ldots 0}$ if the configuration $\tup{\ell,a_{1} \ldots a_{k}}$ can be obtained from the configuration $\tup{\ell',b_{1} \ldots b_{k}}$ through a single transition in $\cM$.
    \item $\delta'(\tup{\bot,\bot \ldots \bot , b_{1} \ldots b_{k}},\ell)=\tup{\ell,b_{1} \ldots b_{k} , 0 \ldots 0}$ for all $\ell \in \loc, 0 \leq b_{1} \ldots b_{k} \leq m$.
    \item $\delta'(\tup{\ell,a_{1} \ldots a_{k} , 0 \ldots 0,b_{j} \ldots b_{k}},z_{j})=\tup{\ell,a_{1} \ldots a_{k} , 0 \ldots 0,b_{j}+1 \ldots b_{k}}$ for all $0 \leq j \leq k$, $b_{j} < m$. 
    \item $\delta'(\tup{\ell,a_{1} \ldots a_{k} , 0 \ldots 0,b_{j} \ldots b_{k}},z_{j-x})=\tup{\ell,a_{1} \ldots a_{k} , 0 \ldots 1,0 \ldots b_{j} \ldots b_{k}}$ for all $1 \leq j \leq k$, $1 \leq x \leq j$. 
    \item $F=\{\tup{\ell_0,0 \ldots 0 , 0 \ldots 0}\}$.
\end{itemize}


\subsection{Details for the Proof of \cref{lem:semilinearAndIncreadingThenPeriodic}}
\label{app:semilinearAndIncreadingThenPeriodic}

Generally, let $\gamma=\textrm{lcm}(\{p_i\}_{i=1}^M)$. We split each linear component $\lin((a_i,b_i), \{(p_i,r_i)\})$ 
to $\frac{\gamma}{p_i}$ parts, by defining the \emph{$\gamma$-split} of $\lin((a_i,b_i), (p_{i},r_{i}))$ (defined only for $p_i|\gamma$) to be 
$\bigcup_{i=0}^{\frac{\gamma}{p_i}-1}\lin((a_i,b_i)+
i \cdot (p_{i},r_{i}), (\gamma,r_{i}) \cdot \frac{\gamma}{r_i})$.
each such split is semilinear by definition, and it is straightforward to show that $S=\bigcup_{i=1}^k \text{l-split}(\lin((a_i,b_i), (p_{i},r_{i}))$.

\end{document}